\title{Mechanism Design via the Interim Relaxation}
\author{Kshipra Bhawalkar\\
Google Research\\
\texttt{kshipra@google.com}
\and
Marios Mertzanidis \\
Purdue University\\
\texttt{mmertzan@purdue.edu}
\and
Divyarthi Mohan\\
Tel Aviv University\\
\texttt{divyarthim@tau.ac.il}
\and
Alexandros Psomas\\
Purdue University\\
\texttt{apsomas@cs.purdue.edu}}
\date{}
\begin{document}

\maketitle

\begin{abstract}

We study revenue maximization for agents with additive preferences, subject to downward-closed constraints on the set of feasible allocations. 
In seminal work, Alaei~\cite{alaei2014bayesian} introduced a powerful multi-to-single agent reduction based on an ex-ante relaxation of the multi-agent problem. This reduction employs a rounding procedure which is an online contention resolution scheme (OCRS) in disguise, a now widely-used method for rounding fractional solutions in online Bayesian and stochastic optimization problems. In this paper, we leverage our vantage point, 10 years after the work of Alaei, with a rich OCRS toolkit and modern approaches to analyzing multi-agent mechanisms; we introduce a general framework for designing non-sequential and sequential multi-agent, revenue-maximizing mechanisms, capturing a wide variety of problems Alaei's framework could not address. Our framework uses an \emph{interim} relaxation, that is rounded to a feasible mechanism using what we call a two-level OCRS, which allows for some structured dependence between the activation of its input elements. For a wide family of constraints, we can construct such schemes using existing OCRSs as a black box; for other constraints, such as knapsack, we construct such schemes from scratch. We demonstrate numerous applications of our framework, including a sequential mechanism that guarantees a $\frac{2e}{e-1} \approx 3.16$ approximation to the optimal revenue for the case of additive agents subject to matroid feasibility constraints. We also show how our framework can be easily extended to multi-parameter procurement auctions, where we provide an OCRS for Stochastic Knapsack that might be of independent interest.
\end{abstract}

\newpage

\section{Introduction}


We consider the problem of a revenue-maximizing seller with $m$ heterogeneous items for sale to $n$ strategic agents with additive preferences, subject to downward-closed constraints on the set of feasible allocations. Revenue maximization for multi-agent environments is a central problem in Computer Science and Economics. Beyond Myerson's~\cite{myerson1981optimal} single-item mechanism, characterizing the revenue optimal mechanism in multi-item settings is a notoriously hard problem. Revenue-optimal mechanisms are hard to compute even in basic settings, and even exhibit various counter-intuitive
properties~\cite{manelli2007multidimensional,daskalakis2013mechanism,daskalakis2015strong,briest2015pricing,hart2015maximal,daskalakis2015multi,hart2019selling,psomas2022infinite}. An active research area strives to understand optimal and approximately optimal mechanisms from various perspectives, including their computational complexity~\cite{CDW_1, CDW_2, CDW_3, CDW_4},
sample complexity~\cite{cole2014sample,huang2015making,devanur2016sample,morgenstern2016learning,cai2017learning,guo2019settling,gonczarowski2021sample}, robustness~\cite{bergemann2011robust,cai2017learning,Dutting19,li2019revenue,psomas2019smoothed,brustle2020multi,makur2024robustness}, and the tradeoffs between simplicity and optimality~\cite{ChawlaHK07,ChawlaHMS10,ChawlaMS15,Yao15,RubinsteinW15,chawla2016subadditive,cai2017subadditive,kleinberg2019matroid,cai2019duality,babaioff2020simple,BabaioffGN17, KothariMSSW19}.


Influential work by Alaei~\cite{alaei2014bayesian} provides a framework for constructing multi-agent mechanisms with ex-post supply constraints via a reduction to single-agent mechanism design with ex-ante supply constraints.  On a high level, Alaei's framework first finds a feasible in expectation ex-ante allocation rule: a vector $x \in [0,1]^{nm}$, where $x_{i,j}$ is the probability of allocating item $j$ to agent $i$, over the randomness in the mechanism and all agents' valuations. Given this ex-ante relaxation, the framework needs a single agent mechanism for each agent $i$, such that item $j$ is allocated to agent $i$ with probability at most $x_{i,j}$. Alaei shows that running such single-agent mechanisms independently can be combined with a rounding procedure (in order to satisfy the supply constraints ex-post) to give an overall approximately optimal multi-agent mechanism. This rounding step, Alaei's solution to his ``magician's problem,'' is an online contention resolution scheme (henceforth, OCRS), in disguise. OCRSs, later defined by Feldman et al.~\cite{feldman2021online}, are a widely applicable tool for rounding fractional solutions in Bayesian and stochastic online optimization problems.

In this paper, we leverage our vantage point, 10 years after the work of Alaei~\cite{alaei2014bayesian}, with a rich OCRS toolkit and modern approaches to analyzing multi-agent mechanisms, to introduce a novel, general framework for designing both non-sequential and sequential multi-agent, revenue-maximizing mechanisms for agents with additive preferences, subject to downward-closed constraints on the set of feasible allocations. Our framework uses an \emph{interim} relaxation, that is rounded to a feasible mechanism, using what we call a two-level OCRS, allowing for some structured dependence between the activation of its input elements. For a wide family of constraints, we can construct such schemes using OCRSs as a black box; for other constraints, e.g., knapsack, we construct such schemes from scratch. We demonstrate numerous applications of our framework, including a sequential mechanism that guarantees a $\frac{2e}{e-1} \approx 3.16$ approximation to the optimal revenue for the case of additive agents subject to matroid feasibility constraints. We also show how our framework can be easily extended to multi-parameter procurement auctions, where we provide an OCRS for Stochastic Knapsack that might be of independent interest.

\subsection{Our Contributions}

Our framework relies on an interim relaxation.  Intuitively, an interim form (or reduced form) of a mechanism $\M$ has variables $\pi_{i,j}^{\M}(v_i)$, which indicate the probability that agent $i$ receives item $j$ when reporting valuation $v_i$ to $\M$ (over the randomness in $\M$, as well as the randomness in other agents' valuations), and variables $q_i^{\M}(v_i)$, which indicate the expected payment of agent $i$ when reporting valuation $v_i$ to $\M$ (over the same randomness). Writing a linear program that optimizes revenue over the space of all feasible interim rules has proven to be a useful endeavor when computing optimal and approximately optimal mechanisms~\cite{CDW_1, CDW_2, CDW_3, CDW_4}, as well as for deriving upper bounds (via duality) to the revenue optimal mechanism~\cite{cai2019duality}. While the number of variables in this program (corresponding to interim rules) is polynomial, the number of constraints needed to ensure that an interim rule is feasible (i.e., that there exists a mechanism $\M$ that induces it) is typically exponential (even for, e.g., the simple case of selling a single item to $n$ agents).
The starting point of our approach is to consider a relaxation of these feasibility constraints, resulting in interim rules are \emph{feasible in expectation}.




Given (optimal or approximately optimal) interim rules that are feasible in expectation the first natural step for rounding to an actual mechanism is to use a CRS/OCRS. Contention resolution schemes, or CRSs, were defined by Chekuri et al.~\cite{chekuri2014submodular} as a tool for rounding fractional solutions in (submodular) optimization problems. In this framework, there is a finite ground set of elements $N = \{ e_1, \dots, e_k \}$, a downward-closed family $\Fcal$ of subsets of $N$, and a fractionally feasible point $x^* \in [0,1]^k$. The main idea is to obtain a (possibly infeasible) random set $R(x^*)$ from $x^*$, by treating $x^*$ as a product distribution: element $i$ is included in $R(x^*)$ with probability $x^*_i$. Given $R(x^*)$, a $c$-selectable CRS selects a set $I \subseteq R(x^*)$ that is feasible (i.e., $I \in \Fcal$), in a way that each element is selected with probability at least $c$ if it is in $R(x^*)$. A refined definition, $(b,c)$-selectable CRSs, for $b \in [0,1]$, extends the concept of $c$-selectable CRSs (which are simply $(1,c)$-selectable) and provides the same guarantee per element when given as input a set $R(b \, x^*)$. Feldman et al.~\cite{feldman2021online} extended the CRS framework to online settings and defined OCRSs.

Back to our problem, a first blueprint for a mechanism, given interim rules that are feasible in expectation, would be to elicit reports $r_1, \dots, r_n$ from the agents, and then construct a set of active elements (the input to the CRS/OCRS) according to $\pi_{i,j}(r_i)$, for every agent $i$ and item $j$. A technical complication is that CRSs/OCRSs receive as input elements that become active \emph{independently}. In our blueprint, the event that element ``$(i,j,r_i)$'' becomes active is correlated with the event that element ``$(i,j',r_i)$'' becomes active. To bypass this obstacle, we define variants of CRSs and OCRSs which we call \emph{two-level contention resolution schemes}, or tCRS, and \emph{two-level online contention resolution schemes}, or tOCRS. Intuitively, a tCRS/tOCRS receives an $n$ by $m$ matrix of elements, such that elements in the same row are independent, conditioned on the value of a row-specific random variable (and these row-specific random variables are independent).


\paragraph{Our framework.}
Informally, our overall framework takes as input feasible in expectation interim rules and a tCRS/tOCRS and outputs a mechanism; for the case of tOCRS the mechanism is sequential, i.e., it sequentially approaches each agent $i$, elicits a report $r_i$, and decides on the outcome of agent $i$ (her allocation and payment) before proceeding to the next agent.
We require that valuations are independent across agents, but the value of agent $i$ for item $j$ can be correlated with her value for item $j'$.
Our mechanisms are Bayesian incentive compatible (BIC) and Bayesian individually rational (BIR). Given an $\alpha \geq 1$ approximately optimal interim form, and a $(b,c)$-selectable tCRS/tOCRS, our mechanism is $\frac{\alpha}{b \, c}$ approximately optimal (\Cref{thm: from t to mechanisms}). See~\Cref{fig:example}.

\begin{figure}[hbt]
  \centering
  \includegraphics[width=0.8\linewidth]{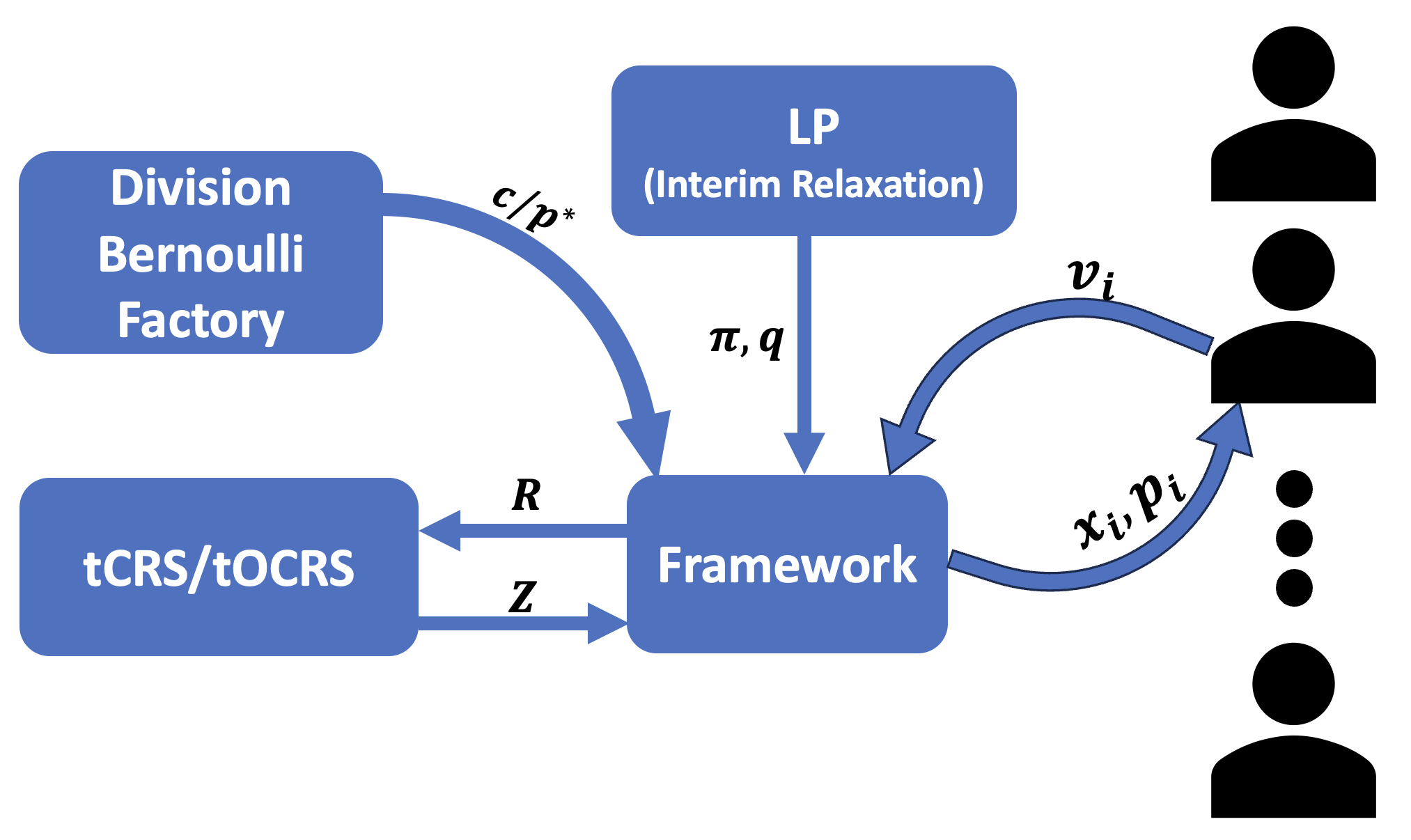} 
\caption{Given an interim form $(\pi,q)$, a feasible solution for~\eqref{lp}, our framework (sequentially in the case of tOCRSs) elicits valuations from the agents. Given the report of agent $i$, $v_i$, it executes the tCRS/tOCRS on a set of active elements $R$, which returns a set $Z$. The allocation of agent $i$ is constructed from $Z$; when given only black-box access to the tCRS/tOCRS this construction can be done efficiently via a Bernoulli factory for division.}
  \label{fig:example}
\end{figure}

Towards constructing tCRSs/tOCRSs, we first give a general reduction for constructing a tCRS/tOCRS, that uses CRSs/OCRSs as a black-box (\Cref{theorem:vertical-horizontal}). Informally, if CRSs/OCRSs for certain feasibility sets $\Fcal_1, \dots, \Fcal_k$ exist, we can provide a tCRS/tOCRS where any combination of the $\Fcal_i$s is satisfied for elements in the same row or column. Combining with known CRS/OCRS results we get tCRSs/tOCRSs for various settings of interest.
Next, we give $1/10$-selectable tOCRS for Knapsack constraints and a $1/9$-selectable tOCRS for Multi-Choice Knapsack constraints (Theorems~\ref{theorem:knapsack} and~\ref{theorem:knapsack unit demand}). Notably, our tOCRS for knapsack implies a $1/10$-selectable OCRS for knapsack, which is better than the $0.085$-selectable OCRS given by Feldman et al.~\cite{feldman2021online}, but not as good as the state-of-the-art $1/(3+e^{-2})$-selectable ($\approx 0.319$-selectable) OCRS of Jiang et al.~\cite{jiang2022tight}.

\paragraph{Applications.} Plugging the aforementioned tCRSs/tOCRSs into our framework gives numerous interesting applications. As a first application, consider the problem of auctioning off $m$ items to $n$ agents with additive preferences, such that the set of agents that each item $j$ is allocated to must be an independent set of a matroid, and the items allocated to each agent must be an independent set of a matroid. Our results imply that there exists a sequential, BIC and BIR mechanism that guarantees a $\frac{2e}{e-1} \approx 3.16$ approximation to the optimal revenue (\Cref{application: matroid constraint}). 
The previously best-known approximation possible by a sequential mechanism was $70$ (for the special case where every item can be allocated at most once, and where items' values are independent), due to Cai and Zhao~\cite{cai2017subadditive}, whose mechanism has additional desirable simplicity properties (that we do not guarantee here, and are in fact impossible to guarantee for correlated items~\cite{briest2015pricing,hart2019selling}). We note that, to get an end-to-end sequential mechanism efficiently, one would need to efficiently compute an optimal, feasible in expectation interim form. One way to compute such rules is to solve a simple linear program (\ref{lp}), which has polynomially many constraints for many natural choices of matroids (e.g., uniform matroids). Additionally, if one is given only black-box query access to the required (for this application) tOCRS, it is still possible to implement a sequential mechanism efficiently, via Bernoulli factories, without any loss in the BIC guarantees, but by incurring a small additional cost in approximation. See~\Cref{subsec: implementation} for details.

As a next application, consider the problem of auctioning off $m$ items to $n$ agents with additive preferences, where each item $j \in [m]$ has some weight $k_j$ and the total weight of items sold cannot exceed $K$. Our results imply that there exists an efficiently computable, sequential, BIC and BIR mechanism that guarantees a $10$ approximation to the optimal revenue. Additionally, if each agent $i$ can get at most one item, there exists an efficiently computable, sequential, BIC and BIR mechanism that guarantees a $9$ approximation to the optimal revenue (\Cref{application: Knapsack}). Finally, consider the problem of auctioning off $m$ items to $n$ agents with \emph{arbitrary} valuation functions, where each item $j \in [m]$ has some weight $k_j$ and the total weight of items sold cannot exceed $K$. Then, our results imply that there exists a (computationally inefficient) sequential, BIC and BIR mechanism that guarantees a $9$ approximation to the optimal revenue (\Cref{application: arbitrary}). 
\vspace{-3mm}
\paragraph{Extensions.}
Our framework can be easily extended to other mechanism design problems, beyond auctioning off items to agents. In~\Cref{sec: procurement} we give an extension to \emph{procurement auctions}, where a value-maximizing buyer is interested in buying services from strategic sellers, subject to a budget constraint. In this case, we show how, given an OCRS for Stochastic Knapsack,\footnote{See~\Cref{sec: procurement} for definitions.} it is possible to design an approximately optimal sequential multi-parameter procurement auction. Combining with a result of Jiang et al.~\cite{jiang2022tight}, we then have a $(3+e^{-2})$-approximately optimal sequential procurement auction (\Cref{application: procurement}). As an aside, we also give an OCRS for the stochastic knapsack setting that might be of independent interest (\Cref{theorem:stochastic knapsack}). Feldman et. al.~\cite{feldman2021online} give a greedy and monotone\footnote{An OCRS is greedy if it fixes a downward-closed family of feasible sets before the (online) process starts, and greedily accepts any active element $e$ that will not violate feasibility if included. An OCRS $\mu$ is \emph{monotone} if for all $e \in A \subseteq B$, the probability that $\mu$ selects $e$ when $A$ is the set of active elements is at most the probability that $\mu$ selects $e$ when $B$ is the set of active elements. These properties are important for applications in submodular optimization~\cite{chekuri2014submodular}.} $(3/2- \sqrt{2})$-selectable OCRS $(\approx 0.0858)$ for Knapsack, while Jiang et. al.~\cite{jiang2022tight} give a $\frac{1}{3+e^{-2}}$-selectable OCRS $(\approx 0.319)$ for Stochastic Knapsack (this OCRS induces a non-greedy and non-monotone OCRS for Knapsack). We give $c$-selectable OCRS for Stochastic Knapsack (that induces a greedy and monotone OCRS for Knapsack), where $c = \max\{\frac{1-k^*}{2-k^*}, 1/6\}$, and $k^*$ is a parameter that depends on the maximum possible weight (in the support of the distributions from which the stochastic weights are drawn); our OCRS is therefore always better than the OCRS of Feldman et. al., and better than the OCRS of Jiang et. al. when $k^*$ is small.

\vspace{-3mm}
\paragraph{Roadmap.}

We provide background on CRSs/OCRSs and Bernoulli factories, and describe our model and our extension to two-level CRSs/OCRSs in~\Cref{sec:prelims}.
In~\Cref{sec: mech from tCRS} we show how to construct mechanisms given tCRSs and tOCRSs. In~\Cref{sec:constructing tCRS/tOCRS} we show how to construct tCRSs and tOCRSs. In~\Cref{sec:applications} we put together the results from the two previous sections to give end-to-end mechanisms. In~\Cref{sec: procurement} we extend our framework to procurement auctions. Further related work is discussed in~\Cref{sec: related work}.

\section{Preliminaries}\label{sec:prelims}

We consider the problem of a seller with $m$ indivisible, heterogeneous items for sale to $n$ strategic agents.
Each agent $i$ has a private valuation vector $v_i$ that is drawn independently from an $m$-dimensional distribution $\Dcal_i$ (that is known to the seller). We write $\Vcal_i = supp(\Dcal_i)$ for the set of possible valuations for agent $i$.
Agent $i$ has a value $v_{i,j}$ for item $j$. We write $\Dcal_{i,j}$ for the marginal distribution for item $j$, noting that $\Dcal_{i,j}$ is not necessarily independent of $\Dcal_{i,j'}$. We assume that agents have \emph{additive preferences}, i.e.,
the value of agent $i$ with valuation $v_i$ for a subset of items $S \subseteq [m]$ is $\sum_{j \in S} v_{i,j}$. agents are \emph{quasi-linear}: the utility of an agent is her value minus her payment. An (integral) allocation $x \in \{ 0, 1 \}^{n \cdot m}$ indicates which item was received by which agent, i.e., $x_{i,j} \in \{ 0, 1 \}$ is the indicator for whether agent $i$ received item $j$. There are constraints on the set of feasible allocations represented by a set $\Fcal \subseteq \{ 0, 1 \}^{n \cdot m}$; that is, an allocation $x$ is feasible if $x \in \Fcal$ (therefore, one can equivalently think of the agents as constrained additive). Let $P_{\Fcal}$ be the convex hull of all characteristic vectors of $\Fcal$, i.e. $P_{\Fcal} = conv\{\mathbf{1_{F}}: F \in \Fcal\}$. We write $P_{\Fcal}^i$ for the polytope that corresponds to agent $i$, i.e., the polytope $P_{\Fcal}$ when we only consider the $m$ dimensions that correspond to the allocation of agent $i$. 


\subsection{Mechanism Design Preliminaries}

A mechanism $\M$ takes as input a reported valuation from each agent and selects a (possibly random) allocation in $\Fcal$, and payments to charge the agents. An agent's objective is to maximize her expected utility. A mechanism $\M$ is \emph{Bayesian Incentive Compatible (BIC)} if every agent $i \in [n]$ maximizes her expected utility by reporting her true valuation $v_i$, assuming other agents do so as well, where this expectation is over the randomness of other agents' valuations, as well as the randomness of the mechanism. A mechanism is \emph{Bayesian Individually Rational} (BIR) if every agent $i \in [n]$ has non-negative expected utility when reporting her true valuation (assuming other agents do so as well). The (expected) revenue of a BIC mechanism is the expected sum of payments made when agents draw their valuations from $\Dcal$ (and report their true valuations to the mechanism). We say that a mechanism is BIC-IR if it is both BIC and BIR.

The \emph{optimal mechanism} for a given distribution $\Dcal$, whose revenue is denoted by $\Rev(\Dcal)$, maximizes expected revenue over all BIC-IR mechanisms. For a given mechanism $\M$, we slightly abuse notation and write $\Rev^{\M}(\Dcal)$ to denote its revenue under a distribution $\Dcal$. A mechanism guarantees an $\alpha$ approximation to the optimal revenue if $\alpha \Rev^{\M}(\Dcal) \geq \Rev(\Dcal)$. Finally, we say that a mechanism is \emph{sequential} if it sequentially approaches agent $i$, elicits a report $r_i$, and allocates items to $i$ before proceeding to the next agent.

\paragraph{Interim allocations and payments.}
The \emph{interim allocation} of a mechanism $\M$, $\pi^{\M}$, indicates, for each agent $i$ and item $j$ the probability $\pi^{\M}_{i,j}(r_i)$ that agent $i$ receives item $j$ when she reports valuation $r_i$ (over the randomness in $\M$ and the randomness in other agents' reported valuations $v_{-i}$, drawn from $\Dcal_{-i}$). The \emph{interim payment} of agent $i$, $q^{\M}_{i}(r_i)$, is the expected payment she makes when she reports valuation $r_i$ (again, over the randomness in $\M$ and the randomness in other agents' reported valuations).
It is easy to see that the expected utility of agent $i$ with valuation $v_i$ when reporting $r_i$ to a mechanism $\M$, is $\sum_{j \in [m]} v_{i,j} \pi^{\M}_{i,j}(r_i) - q^{\M}_{i}(r_i)$. We will drop the superscript $\M$ when the mechanism is clear from the context.


Given interim allocations and payments, it is not a straightforward task to determine whether they are \emph{ex-post feasible}, i.e., whether there exists a mechanism $\M$ that induces the exact probabilities promised by the interim allocations.\footnote{Doing this task efficiently is at the core of the framework of Cai et. al.~\cite{CDW_1, CDW_2, CDW_3, CDW_4} for computing approximately optimal mechanisms.} However, it is typically straightforward to find interim allocations that are \emph{feasible in expectation}.
\begin{definition}[Feasibility in expectation]
An interim allocation rule $\pi$ is feasible in expectation if (i) $\forall i\in [n], v_i \in \Vcal_i$, $\pi_i(v_i)\in P_{\Fcal}^i$, and (ii) $\forall i\in [n], j \in [n], \sum_{v_i \in \Vcal_i} \Pr[v_i] \cdot \pi_{i,j}(v_i) \in P_{\Fcal}$.
\end{definition}

We say that an interim allocation, payment pair $(\pi, q)$ is BIC if $\forall i\in [n], v_i, v'_i \in \Vcal_i$ it holds that 
\[
\sum_{j \in [m]} v_{i,j}\pi_{i,j}(v_i) - q_i(v_i) \ge \sum_{j \in [m]} v_{i,j}\pi_{i,j}(v'_i) - q_i(v'_i).
\]
An interim allocation, payment pair $(\pi, q)$ is BIR if $\forall i\in [n], v_i \in \Vcal_i$, $\sum_{j \in [m]} v_{i,j}\pi_{i,j}(v_i) - q_i(v_i) \geq 0$.
An interim allocation, payment pair $(\pi, q)$ is BIC-IR if it is both BIC and BIR.
Finally, an interim allocation, payment pair $(\pi, q)$ guarantees an $\alpha$-approximation to the optimal revenue if $\alpha \left( \sum_{i \in [n]} \sum_{v_i \in \Vcal_i} \Pr[v_i]  \cdot q_{i}(v_i) \right) \geq \Rev(\Dcal)$.

\subsection{CRS/OCRS and tCRS/tOCRS Preliminaries}




Consider a finite ground set $N = \{e_1, \cdots, e_k\}$ and a family of feasible subsets $\Fcal \subseteq 2^N$. Let $P_{\Fcal} = conv(\{\mathbf{1}_{I}|I \in \Fcal\})$ be the convex hull of all characteristic vectors of feasible sets. Let $x \in P_{\Fcal}$, and let $R(x) \subseteq N$ be a random set obtained by including each element $i \in N$ independently with probability $x_i$. The set $R(x)$ is feasible in expectation (with respect to $\Fcal$) but not necessarily ex-post feasible.
Given $R(x)$, a contention resolution scheme (CRS) selects a subset $I \subseteq R(x)$ such that $ I \in \Fcal$. If elements of $R(x)$ are given in an online manner, the corresponding scheme is called an online contention resolution scheme (OCRS). To avoid trivial solutions (e.g., $I = \emptyset$), we would additionally like to have the property that each element $i \in N$ appears in $I$ with probability at least $c x_i$ for some $c$. We call such schemes $c$-selectable. Some schemes only work if elements come from $R(b \,x)$; such schemes are called $(b,c)$-selectable. Formally:

\begin{definition}[Contention Resolution Scheme (CRS)~\cite{chekuri2014submodular}]
    Let $b,c \in [0,1]$. For every $x \in P_{\Fcal}$, let $R(b 
    \, x)$ be a random subset of active elements, where element $i \in N$ is active with probability $b \, x_i$, independently. A $(b,c)$-selectable Contention Resolution scheme (CRS) $\mu$ for $P_{\Fcal}$ is a (possibly randomized) procedure that, given a set of active elements $R(b 
    \, x)$ returns a set $\mu(R(b \, x)) = I \subseteq R(b \, x)$, such that (i) $I \in \Fcal$, and (ii) $\Pr{\left[i \in I | i \in R(b \, x) \right]} \ge c, \forall i \in N$.
\end{definition}

\begin{definition}[Online Contention Resolution Scheme(OCRS)~\cite{feldman2021online}]
    Let $b,c \in [0,1]$. For every $x \in P_{\Fcal}$, let $R(b 
    \, x)$ be a random subset of active elements, where element $i \in N$ is active with probability $b \, x_i$, independently. A $(b,c)$-selectable Online Contention Resolution scheme (OCRS) $\mu$ for $P_{\Fcal}$ is a (possibly randomized) online procedure that, given active elements one by one, decides whether to select an active element irrevocably before the next element is revealed. The OCRS $\mu$ returns a set $I \subseteq R(b \, x)$, such that (i) $I \in \Fcal$, and (ii) $\Pr{\left[i \in I | i \in R(b \, x) \right]} \ge c, \forall i \in N$.
\end{definition}





We introduce a variant of the previous CRS/OCRS model that allows for dependencies between the activation of different elements. This slightly changes the setup, as well as the definition of a ``scheme.'' Consider the ground set $N = \{ e_{i,j} \}_{i \in [n], j \in [m]}$, where $|N| = n \, m$, and a family of feasible subsets $\Fcal \subseteq 2^N$. Let $P_{\Fcal} = conv(\{\mathbf{1}_{I}|I \in \Fcal\})$ be the convex hull of all characteristic vectors of feasible sets. 
Let $P_{\Fcal}^i$ be the restriction of $P_{\Fcal}$ to the $m$ dimensions that correspond to elements $(i,j)$, $j \in [m]$. Elements will become active in a certain, dependent way, as induced by a \emph{two-level stochastic process} $(\Dcal,x)$, defined as follows:

\begin{definition}[Two-Level stochastic process]
    We say that $(\Dcal, x)$ is a \emph{two-level stochastic process} over $\{ 0, 1 \}^{n \, m}$, where $\Dcal = \times_{i=1}^n \Dcal_i$ is a product distribution and $x \in [0,1]^{m \sum_{i=1}^n |\Vcal_i| }$,
    if it is induced by the following procedure: (i) we first sample $d_i$ from $\Dcal_i$, independently, and (ii) for each $(i,j) \in [n]\times[m]$, element $e_{i,j}$ becomes active with probability $x_{i,j}(d_i)$, independently.
\end{definition}

For the case of CRSs/OCRSs, since elements became active independently, (expected) feasibility for a vector/product distribution $x$ boiled down to $x$ being fractionally feasible with respect to $\Fcal$, i.e. $x \in P_{\Fcal}$. Here, since elements are active in a dependent way, our notion of feasibility needs to be refined:
\begin{definition}[Feasibility]
    Let $\Fcal \subseteq 2^{[n]\times [m]}$ be a feasibility set and $P_{\Fcal}$ be its relaxation. We say that a two-level stochastic process $(\Dcal, x)$ is feasible with respect to $\Fcal$ if:
    \begin{enumerate}
        \item For each $i \in [n]$ and each $d_i \in supp(\Dcal_i)$, $\left( x_{i,1}(d_i), \cdots, x_{i,m}(d_i) \right) \in P^i_{\Fcal}$.
        \item $w \in P_{\Fcal}$, where $w_{i,j} = \sum_{d_i \in supp(\Dcal_i)} \Pr\left[ \Dcal_i = d_i \right] x_{i,j}(d_i)$.  
    \end{enumerate}
\end{definition}

Let $(\Dcal, x)$ be a two-level stochastic process that is feasible with respect to $\Fcal$, and let $R(\Dcal, x) \subseteq N$ be a random set of elements obtained by sampling from $(\Dcal, x)$. Our goal is to select a subset $I \subseteq R(\Dcal, x)$ (possibly online) such that $I \in \Fcal$ and the probability that an active element is selected is lower bounded by a constant $c$. Formally, our definitions of two-level Contention Resolution Schemes (tCRSs) and two-level Online Contention Resolution Schemes (tOCRSs) are as follows.

\begin{definition}[Two-level CRS (tCRS)]
    Let $b, c \in [0,1]$.
    Let $(\Dcal, x)$ be a two-level stochastic process that is feasible with respect to $\Fcal$, and let $R(\Dcal, b \, x) \subseteq N$ be a random set of elements obtained by sampling from $(\Dcal, b \, x)$.
    A $(b,c)$-selectable \emph{two-level CRS} (tCRS) $\mu$ for $\Fcal$ is a (possibly randomized) procedure that, given a set of active elements $R(\Dcal, b \, x) \subseteq N$, returns a set $\mu( R(\Dcal, b \, x) ) = I \subseteq R(\Dcal, b \, x)$, such that (i) $I \in \Fcal$, and (ii) $\Pr{\left[i \in I | i \in R(\Dcal, b \, x) \right]} \ge c, \forall i \in N$.
\end{definition}


\begin{definition}[Two-level OCRS (tOCRS)]
    Let $b, c \in [0,1]$.
    Let $(\Dcal, x)$ be a two-level stochastic process that is feasible with respect to $\Fcal$, and let $R(\Dcal, b \, x) \subseteq N$ be a random set of elements obtained by sampling from $(\Dcal, b \, x)$.
    Elements of $R(\Dcal, b \, x)$ appear online, in batches of size $m$: the process selects some $i \in [n]$ and reveals all elements $\{ e_{i,j} \}_{j \in [m]}$, before selecting a new $i' \in [n]$. A $(b,c)$-selectable \emph{two-level OCRS} (tOCRS) $\mu$ for $\Fcal$ is a (possibly randomized) online procedure that, given active elements that satisfy the aforementioned ordering, decides whether to select an active element irrevocably before the next element is revealed, i.e., returns a set $I \subseteq R(\Dcal, b \, x)$, such that (i) $I \in \Fcal$, and (ii) $\Pr{\left[i \in I | i \in R(\Dcal, b \, x) \right]} \ge c, \forall i \in N$.
\end{definition}


It is easy to see that the existence of a $(b,c)$-selectable tCRS (resp. tOCRS) implies the existence of a $(b,c)$-selectable CRS (resp. OCRS), by a simple simulation argument (for $m=1$). 

\subsection{Bernoulli Factory Preliminaries}

Bernoulli factories were introduced by Kean et. al.~\cite{keane1994bernoulli}, where they are defined as follows.

\begin{definition}[Bernoulli Factory]
  Given a function $f: (0,1) \rightarrow (0,1)$, a Bernoulli factory for $f$ outputs a sample of a Bernoulli variable with bias $f(p)$ (i.e. an $f(p)$-coin), given black-box access to independent samples of a Bernoulli distribution with bias $p \in (0,1)$ (i.e. a $p$-coin).  
\end{definition}

As an illustrative example, imagine that we are given a $p$-coin, a coin that outputs $1$ with probability $p$ and $0$ otherwise. Our goal is to create a new coin that outputs $1$ with probability $f(p)=p^2$. The complication here is that we do not know the value of $p$. $f(p) = p^2$ can be implemented as follows: flip the $p$-coin twice. If both are $1$ then output $1$ (otherwise output $0$). We include additional examples of Bernoulli factories in~\Cref{app: bernoulli}. 
Bernoulli factories have recently been used in mechanism design in the context of black-box reductions~\cite{dughmi2021bernoulli,cai2021efficient}. In this paper, we make use of a Bernoulli factory for division: given one $p_0$-coin and one $p_1$-coin, implement $f(p_0, p_1) = p_0/p_1$ for $p_1-p_0 \ge \delta$. This problem was considered by Nacu et. al.~\cite{nacu2005fast} but their construction is rather involved. Instead, consider~\Cref{algorithm:bernoulli}, the Bernoulli Division factory of Morina~\cite{morina2021bernoulli}. 

\begin{algorithm}[ht]
\caption{Bernoulli Division \cite{morina2021bernoulli}}\label{algorithm:bernoulli}
 \While{true}{
    $X \sim Bern[1/2]$.\label{alg:goto}\\
    \If{X = 0}{
        $W \sim Bern[p_0]$.\\
        \If{$W=1$}{return $1$}
    }
    \Else{ 
        $W \sim Bern[p_1-p_0]$.\\
        \If{$W=1$}
        { return $0$}
    }
 }
\end{algorithm}

\begin{lemma}[\cite{morina2021bernoulli}]\label{lem: bernoulli for division}
Given a $p_0$-coin and a $p_1$-coin, assume $p_1 - p_0 \geq \delta$, and let $N$ be the number of tosses required. Then, \Cref{algorithm:bernoulli} is a Bernoulli factory for $(p_0/p_1)$ which satisfies $\Ex{N} \leq \frac{22.12}{p_1}(1+\delta^{-1})$.
\end{lemma}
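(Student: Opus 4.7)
The plan is to establish the lemma in two parts: correctness of the output distribution, and the bound on $\Ex{N}$.

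For correctness, I will argue via conditioning on the iteration in which the algorithm terminates. Within a single iteration, the probability of returning $1$ equals $\Pr[X=0]\cdot p_0 = \tfrac{1}{2}p_0$, the probability of returning $0$ equals $\Pr[X=1]\cdot (p_1-p_0) = \tfrac{1}{2}(p_1 - p_0)$, and the probability of continuing to the next iteration is $1 - \tfrac{p_1}{2}$. Since iterations are i.i.d., conditioning on termination yields
\[
\Pr[\text{output} = 1] \;=\; \frac{p_0/2}{p_0/2 + (p_1 - p_0)/2} \;=\; \frac{p_0}{p_1},
\]
which is exactly the target bias. Note that the assumption $p_1 - p_0 \ge \delta > 0$ is what ensures that the inner Bernoulli draw Bern$[p_1-p_0]$ is well defined and that termination probability is strictly positive.

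For the bound on $\Ex{N}$, I plan to combine two ingredients. First, the number $K$ of outer iterations is geometric with success probability $p_1/2$, so $\Ex{K} = 2/p_1$. Second, within each iteration the algorithm draws a fair coin (for $X$) and then either tosses the $p_0$-coin once, or invokes a sub-factory to produce a Bern$[p_1 - p_0]$ outcome from the $p_0$-coin and $p_1$-coin. The nontrivial step is controlling the expected cost of this inner factory for the difference $p_1 - p_0$: under the promise $p_1 - p_0 \ge \delta$, standard constructions (and the one used by Morina) yield an expected number of primitive tosses that is $O(1 + 1/\delta)$. Applying Wald's identity across the outer loop,
\[
\Ex{N} \;\le\; \Ex{K} \cdot (\text{expected tosses per iteration}) \;\le\; \frac{2}{p_1}\cdot c\bigl(1+\delta^{-1}\bigr),
\]
and chasing the exact numerical constants from the inner factory through gives the stated $\tfrac{22.12}{p_1}(1+\delta^{-1})$.

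The main obstacle is the quantitative analysis of the sub-factory for Bern$[p_1 - p_0]$: both its correctness and, more importantly, a sharp $O(1/\delta)$ bound on its expected cost under the lower-bound assumption $p_1 - p_0 \ge \delta$ are what drive the $\delta^{-1}$ dependence in the final bound. The outer-loop termination analysis and the correctness calculation, by contrast, are essentially elementary properties of the geometric distribution and conditional probability and should go through cleanly. The remaining bookkeeping is just propagating the explicit constant through the inner factory's expected-toss bound, which is where $22.12$ appears.
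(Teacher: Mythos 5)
Your proposal is correct and follows essentially the same route as the paper: you compute the per-iteration return probabilities and condition on termination to get $p_0/p_1$ (the paper writes out the equivalent geometric series via the $Y_k$ random variables), and you bound $\Ex{N}$ by multiplying the expected number of geometric outer rounds, $2/p_1$, by Morina's $11.06(1+\delta^{-1})$ bound on the expected tosses per round arising from the inner Bernoulli-subtraction sub-factory. The only cosmetic difference is that you invoke Wald's identity by name whereas the paper writes $\Ex{N}=\sum_t \Ex{N_t}\Ex{X_t}$ using independence of $N_t$ and $X_t$; these are the same argument.
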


 Although the process and its correctness are fully described by Morina~\cite{morina2021bernoulli}, the end-to-end expected number of tosses is not explicitly calculated. For completeness, we show these calculations (and also argue that~\Cref{algorithm:bernoulli} is a Bernoulli factory for $(p_0/p_1)$) in~\Cref{app: bernoulli}.

\section{Mechanisms from two-level CRSs and two-level OCRSs}\label{sec: mech from tCRS}

We give two frameworks for constructing mechanisms. The input to a framework is (i) a BIC-IR interim allocation, payment rule pair $(\pi,q)$ that is feasible in expectation and is an $\alpha$ approximation to the optimal revenue, (ii) a $(b,c)$-selectable tCRS/tOCRS for $\Fcal$, and (iii) a parameter $\epsilon \geq 0$. Our frameworks produce BIC-IR, $\frac{\alpha}{b \, (c - \epsilon)}$ approximately optimal (and sequential, for tOCRSs) mechanisms for agents with constrained (with respect to $\Fcal$) additive valuations.

Given a tCRS, our framework,~\Cref{algo: tCRS to mech}, works as follows. First, each agent $i$ reports $r^*_i \in \Vcal_i$ and pays $b \, ( c - \epsilon ) \cdot q_i (r^*_i)$. We then construct a set $R = \{ x_{i,j} \}_{i \in [n], j \in [m]}$ of $n \cdot m$ elements, one for every agent, item pair, where $x_{i,j} = 1$ (i.e., element $(i,j)$ is active) with probability $b \cdot \pi_{i,j}(r^*_i)$. We query the tCRS on input $R$, to get back a subset  $Z$ of selected elements (which is in $\Fcal$, by definition). We flip an additional coin to decide whether to keep an element $(i,j)$, i.e. whether we should allocate item $j$ to agent $i$. This last coin essentially balances the randomness of the tCRS and ensures that the probability that agent $i$ gets item $j$ when they report $r^*_i$ is exactly $b (c-\epsilon) \pi_{i,j}(r^*_i)$, which, combined with the chosen payment, ensures that BIC and BIR are satisfied. 

Given a tOCRS, our framework,~\Cref{algo: tOCRS to mech}, works similarly. We approach each agent $i$ sequentially. Agent $i$ reports $r^*_i \in \Vcal_i$ and pays $b \, ( c - \epsilon ) \cdot q_i (r^*_i)$. We consider each item $j$ sequentially. We make element $(i,j)$ active with probability $b \cdot \pi_{i,j}(r^*_i)$, and then ask the tOCRS if this element should be selected (assuming it was active). If the tOCRS selects the element $(i,j)$, we again flip an additional coin to decide whether agent $i$ should get item $j$.

\begin{algorithm}[!ht]
\DontPrintSemicolon
  \KwInput{allocation, payment rule pair $(\pi,q)$, $(b,c)$-selectable tCRS $\mu$, parameter $\epsilon \geq 0$.}
   \hrulealg
  \For{each $i \in [n]$}{
        Agent $i$ reports $r^*_i \in \mathcal{V}_i$ and pays $b \, (c - \epsilon) \, q_i(r^*_i)$.
        }
  Construct $R = \{ x_{i,j} \}_{i \in [n], j \in [m]}$, where $x_{i,j} \gets 1$ with probability $b \, \pi_{i,j}(r^*_i)$, and $x_{i,j} \gets 0$ otherwise.\\
  $Z = \{ Z_{i,j} \}_{i \in [n], j \in [m]} \gets \mu (R)$. \tcp*{$Z \subseteq R$ is the set of elements that the tCRS picks}
  \For{each element $(i, j) \in [n] \times [m]$ with $Z_{i,j} = 1$}{
        Allocate item $j$ to agent $i$ with probability $\frac{c - \epsilon}{p^*_{i,j}(r^*_i)}$, where $p^*_{i,j}(r^*_i)$ is the probability that $\mu$ selects $(i,j)$ conditioned on $i$'s report $r^*_i$ and $x_{i,j}$ being equal to $1$.\label{eq: flip coin in tCRS}
  }
\caption{Our framework for tCRSs}\label{algo: tCRS to mech}
\end{algorithm}

\begin{algorithm}[!ht]
  \DontPrintSemicolon
  \KwInput{allocation, payment rule pair $(\pi,q)$, $(b,c)$-selectable tOCRS $\mu$, parameter $\epsilon \geq 0$.}
   \hrulealg
  \For{each agent $i \in [n]$}{
        Agent $i$ reports $r^*_i \in \mathcal{V}_i$ and pays $b \, (c - \epsilon) \, q_i(r^*_i)$.\\
        \For{each item $j \in [m]$}{
            $R_{i,j} \gets 1$ with probability $b \, \pi_{i,j}(r^*_i)$.\\
            $Z_{i,j} \gets \mu(R_{i,j})$. \tcp*{$Z_{i,j} \leq R_{i,j}$ indicates if the tOCRS selects element $(i,j)$ when active}
            \If{$Z_{i,j} = 1$}{
                Allocate item $j$ to agent $i$ with probability $\frac{c - \epsilon}{p^*_{i,j}(r^*_i)}$, where $p^*_{i,j}(r^*_i)$ is the probability that $\mu$ selects $(i,j)$ conditioned on $i$'s report $r^*_i$ and $R_{i,j}$ being equal to $1$. \label{eq: flip coin tOCRS}
            }
        }    
    }
\caption{Our framework for tOCRSs}\label{algo: tOCRS to mech}
\end{algorithm}

\vspace{-1mm}
\begin{theorem}\label{thm: from t to mechanisms}
    Given (i) a BIC-IR interim allocation, payment rule pair $(\pi,q)$ that is feasible in expectation and is an $\alpha \geq 1$ approximation to the optimal revenue (ii) a $(b,c)$-selectable tCRS (resp. tOCRS) for $\mathcal{F}$,  and (iii) a parameter $\epsilon \geq 0$,~\Cref{algo: tCRS to mech} and~\Cref{algo: tOCRS to mech} give BIC-IR, $\frac{\alpha}{b \, (c - \epsilon)}$-approximately optimal (and sequential for the case of~\Cref{algo: tOCRS to mech}) mechanisms for $\mathcal{F}$. If we only have query access to the tCRS/tOCRS, our mechanisms can be implemented using a $O(poly( \sum_i |\Vcal_i|, m, \frac{1}{\epsilon}))$ number of queries in expectation.
\end{theorem}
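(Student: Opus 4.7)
The plan is to verify ex-post feasibility, compute the interim allocation rule induced by the mechanism (showing it is exactly a scaling of $\pi$), conclude BIC-IR and the revenue bound, and finally describe the Bernoulli-factory-based implementation that uses only black-box access to $\mu$.

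First I would argue ex-post feasibility. By definition of a $(b,c)$-selectable t(O)CRS, the set $Z$ returned by $\mu$ lies in $\Fcal$. Our mechanisms only allocate items from $Z$ (after an additional, independent coin flip), so the final allocation is a subset of $Z$ and thus in $\Fcal$ by downward closure.

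Next I would compute the induced interim allocation. Fix agent $i$ and report $r^*_i$, and suppose all other agents report truthfully, so that the input to $\mu$ is distributed according to the two-level process $(\Dcal, \pi)$ for which it was designed. Writing $p^*_{i,j}(r^*_i)$ for the probability that $\mu$ selects $(i,j)$ given $R_{i,j} = 1$ and $r^*_i$, $(b,c)$-selectability yields $p^*_{i,j}(r^*_i) \ge c \ge c - \epsilon$, so the coin in line~\ref{eq: flip coin in tCRS} (resp. line~\ref{eq: flip coin tOCRS}) has a valid bias. The probability that item $j$ is allocated to agent $i$ factors as
\[
\underbrace{b\,\pi_{i,j}(r^*_i)}_{\Pr[R_{i,j}=1\mid r^*_i]} \cdot \underbrace{p^*_{i,j}(r^*_i)}_{\Pr[Z_{i,j}=1\mid R_{i,j}=1,\, r^*_i]} \cdot \underbrace{\frac{c-\epsilon}{p^*_{i,j}(r^*_i)}}_{\text{final coin}} \;=\; b(c-\epsilon)\,\pi_{i,j}(r^*_i),
\]
where the cancellation uses that the final coin is drawn independently of everything else. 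Combined with the payment $b(c-\epsilon)\,q_i(r^*_i)$, the induced interim rule is exactly $(b(c-\epsilon)\pi,\, b(c-\epsilon)q)$. Since utilities and revenue are linear in allocation and payment, BIC-IR is inherited from $(\pi,q)$, and the expected revenue equals $b(c-\epsilon)$ times that of $(\pi,q)$, which is at least $\tfrac{b(c-\epsilon)}{\alpha}\Rev(\Dcal)$, giving the claimed $\tfrac{\alpha}{b(c-\epsilon)}$-approximation. The tOCRS variant is sequential by construction: agent $i$'s entire outcome is determined before moving on to agent $i+1$.

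For the efficient black-box implementation the obstacle is that the bias $\tfrac{c-\epsilon}{p^*_{i,j}(r^*_i)}$ cannot be computed in closed form. I would invoke the Bernoulli factory for division (\Cref{algorithm:bernoulli}) with $p_0 = c-\epsilon$ (produced by a constant-biased coin) and $p_1 = p^*_{i,j}(r^*_i)$, simulated by running an independent copy of $\mu$ on fresh samples from the two-level process with $R_{i,j}$ forced to $1$ and returning the indicator that $\mu$ selects $(i,j)$. Since $p_1 - p_0 \ge c - (c-\epsilon) = \epsilon$, \Cref{lem: bernoulli for division} bounds each factory call by $O\!\left(\tfrac{1}{c}(1+\tfrac{1}{\epsilon})\right)$ expected tosses, each of which costs one query to $\mu$. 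With at most $nm$ factory invocations (one per agent-item pair entering the final allocation step), the total expected query complexity is $O\!\left(\tfrac{nm}{c}(1+\tfrac{1}{\epsilon})\right) = O\!\left(\mathrm{poly}(\sum_i |\Vcal_i|, m, \tfrac{1}{\epsilon})\right)$. Because the factory is \emph{exact}, the interim-allocation calculation above carries over verbatim, so BIC-IR is preserved without any extra slack.

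The main subtlety I expect to need to handle carefully is the exact cancellation in the interim-allocation identity: it requires both that $p^*_{i,j}(r^*_i)$ is a well-defined quantity at least $c-\epsilon$ for every report $r^*_i$ (so that the kept-coin bias is legal) and that the kept coin is truly independent of $\mu$'s internal randomness. The first point is precisely where the $(b,c)$-selectability of the t(O)CRS, applied report-by-report to the two-level process induced by $(\pi,q)$, is used; the second is immediate because the kept coin uses fresh randomness disjoint from $\mu$. Everything else reduces to linearity of expectation and the black-box guarantee of \Cref{lem: bernoulli for division}.
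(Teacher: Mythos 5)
Your proposal matches the paper's proof essentially step for step: ex-post feasibility via downward closure of $\Fcal$ applied to the set $Z$ returned by the t(O)CRS, the induced interim rule being exactly $(b(c-\epsilon)\pi,\,b(c-\epsilon)q)$ by the same probability factorization, BIC-IR and the revenue bound by linearity, and the Bernoulli-division factory (Lemma~\ref{lem: bernoulli for division}) with $p_0 = c-\epsilon$ and $p_1 = p^*_{i,j}(r^*_i)$ for the black-box implementation with $O(\mathrm{poly}(\sum_i|\Vcal_i|, m, 1/\epsilon))$ queries. The only slight variance is your explicit remark that the $p^*_{i,j}(r^*_i)$-coin should be simulated with $R_{i,j}$ forced to $1$, which is the right way to condition on $x_{i,j}=1$ as in the definition of $p^*_{i,j}$.
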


\begin{proof}
First, we argue that our frameworks output allocations in $\Fcal$.
By definition, and assuming truthful reports, the interim allocation $\pi$ defines a feasible (for $\Fcal$) two-level stochastic process, where we first sample $v_i$s independently, and then $\pi_{i,j}(v_i)$. Let $x \in R(b \, \pi)$ be the input to the $(b,c)$-selectable tCRS (resp. for tOCRS); by definition, the set of selected elements $Z$ satisfies $Z \in \Fcal$. We allocate a subset of $Z$, thus our allocations are feasible, since $\Fcal$ is downward closed. Furthermore, for any element $z_{i,j}$, $\Pr[ z_{i,j} = 1 | x_{i,j} = 1 ] = p^*_{i,j}(v_i) \geq c$, and thus $\frac{c - \epsilon}{p^*_{i,j}(v_i)}$ is a probability.
Second, we argue that our mechanisms are BIC. From the perspective of agent $i$, a report $r_i \in \Vcal_i$ costs  $b \, (c - \epsilon) \cdot q_i(r_i)$ and allocates item $j$ with probability $b \, \pi_{i,j}(r_i) \cdot p^*_{i,j}(r_i) \cdot \frac{c - \epsilon}{p^*_{i,j}(r_i)} = b \, (c - \epsilon) \cdot \pi_{i,j}(r_i)$, and therefore translates to an expected utility of costs $b \, (c - \epsilon) \sum_{j \in [m]} v_{i,j} \pi_{i,j}(r_i) - b \, (c - \epsilon) \cdot q_i(r_i)$; since $(\pi, q)$ is BIC, so is the mechanism we output. Near-identical arguments imply the BIR guarantee and revenue guarantees.

When given only black-box access to a tCRS/tOCRS, it is not immediately clear how one can flip a coin with probability exactly $(c - \epsilon)/p^*_{i,j}(v_i)$ (efficiently or otherwise), as needed in step~\eqref{eq: flip coin in tCRS} of~\Cref{algo: tCRS to mech} and step~\eqref{eq: flip coin tOCRS} of~\Cref{algo: tOCRS to mech}. Using a Bernoulli factory for division (such as the result of~\cite{morina2021bernoulli} discussed in~\Cref{sec:prelims}), this step can be implemented 
using $\frac{22.12}{p^*_{i,j}(v_i)}(1+\epsilon^{-1}) \in O( 1 / \epsilon )$ calls in expectation (\Cref{lem: bernoulli for division}), assuming that $c \leq p^*_{i,j}(v_i)$ is a constant. Overall, we have $O(poly( \sum_i |\Vcal_i|, m, \frac{1}{\epsilon}))$ queries in expectation for the entire execution of a framework; we discuss implementation details in~\Cref{subsec: implementation}.
\end{proof}


\vspace{-6mm}
\subsection{Implementation Considerations}\label{subsec: implementation}

Here, we highlight some implementation details for our framework. First, we give a simple LP that computes optimal ($\alpha = 1$), feasible in expectation, BIC-IR interim allocation and payment rules $(\pi,q)$. Second, we flesh out implementation details of step~\eqref{eq: flip coin in tCRS} of~\Cref{algo: tCRS to mech} and step~\eqref{eq: flip coin tOCRS} of~\Cref{algo: tOCRS to mech}, flipping a coin with probability $(c - \epsilon)/p^*_{i,j}(v_i)$, when given only black-box access to a tCRS/tOCRS. Note that this is not a straightforward task, since $p^*_{i,j}(v_i)$ might not be known to the algorithm (otherwise, this task is indeed trivial), and approximating this probability (e.g., via repeated simulations), instead of computing it exactly, results in a violation of the BIC constraint.

\paragraph{Finding feasible in expectation, optimal interim rules.}
Consider the following linear program,~\eqref{lp}, which computes an interim relaxation of the revenue optimal BIC-IR mechanism.

\begin{equation}
\label{lp}
\begin{array}{ll@{}ll}
\text{maximize}  & \displaystyle\sum\limits_{i \in [n]} \displaystyle\sum\limits_{v_i \in \Vcal_i} \Pr[v_i] q_i(v_i)\\ \\
\text{s.t.}& \sum_{j \in [m]} v_{i,j}\pi_{i,j}(v_i) - q_i(v_i) \ge \sum_{j \in [m]} v_{i,j}\pi_{i,j}(v'_i) - q_i(v'_i) && \forall i\in [n], v_i, v'_i \in \Vcal_i\\ \\
&  \sum_{j \in [m]} v_{i,j}\pi_{i,j}(v_i) - q_i(v_i) \ge 0 && \forall i\in [n], v_i \in \Vcal_i  \\ \\
& \pi_i(v_i)\in P_{\Fcal}^i && \forall i\in [n], v_i \in \Vcal_i \\ \\
&  \left[\sum_{v_i \in \Vcal_i} \Pr[v_i] \cdot \pi_{i,j}(v_i) \right]_{(i,j) \in [n] \times [m]} \in P_{\Fcal}
\end{array}
\tag{LP1}
\end{equation}

This linear program has $O(m \sum_{i \in [n]} | \Vcal_i |)$ variables and $O(\sum_{i \in [n]} | \Vcal_i |^2)$ constraints, excluding the constraints necessary to represent $P_{\Fcal}^i$ and $P_{\Fcal}$. Therefore, the overall computational complexity of solving this LP depends on whether $P_{\Fcal}$ and $P_{\Fcal}^i$ have an efficient representation.

We note that, in a series of papers, Cai et. al.~\cite{CDW_1, CDW_2, CDW_3, CDW_4} propose a similar linear program for finding approximately optimal mechanisms. The variables in their LP are the same: the interim allocations and payments. In their framework, however, finding interim allocation rules that can be induced by ex-post feasible allocation rules is crucial. To do so, their constraints, even for simple feasibility sets, are exponential (see \cite{border2007reduced}). To solve their LP efficiently they show how to construct (efficient) separation oracles. Once interim allocation and payment rules are found, they use complicated techniques --- techniques that are not amenable to online arrivals of agents --- to construct a final mechanism. In contrast, our LP, even though it uses exactly the same variables, only needs to ensure feasibility in expectation. For many settings of interest, our LP has a polynomial-size description, and thus can be solved by any LP solver, making our approach more convenient in practice. Furthermore, our techniques are designed to accommodate for online arrivals of agents. 
\vspace{-3mm}
\paragraph{Flipping a coin with probability $p^*_{i,j}(v_i)$.}
In order to implement step~\eqref{eq: flip coin in tCRS} of~\Cref{algo: tCRS to mech} and step~\eqref{eq: flip coin tOCRS} of~\Cref{algo: tOCRS to mech} we can use a Bernoulli factory for division (e.g., the Bernoulli factory of~\cite{morina2021bernoulli}]), which, given a $(c-\epsilon)$-coin and a $p^*_{i,j}(v_i)$-coin, outputs a $\frac{c-\epsilon}{p^*_{i,j}(v_i)}$ using an expected number of coin tosses that is at most $O((p^*_{i,j}(v_i) - c + \epsilon)^{-1})$. We know $c - \epsilon$ exactly, so the $(c-\epsilon)$-coin can be implemented trivially. One can implement a $p^*_{i,j}(v_i)$-coin as follows: For each $i' \neq i$ sample $\hat{r}_{i'} \sim \Dcal_{i'}$ and let $R_{i',j} \gets 1$ with probability $b \, \pi_{i',j}(\hat{r}_{i'})$. Also let $R_{i,j} = 1$ with probability $b \, \pi_{i,j}(v_i)$. Querying the tCRS/tOCRS $\mu$ on the active set $R$ returns a set $Z$ of selected elements; output $Z_{i,j}$ as the coin flip for the $p^*_{i,j}(v_i)$-coin.

\section{Constructing tCRS and tOCRS}\label{sec:constructing tCRS/tOCRS}

In this section, we show how to construct tCRSs and tOCRSs for various feasibility constraints. First, we prove that for a general family of constraints we call  \emph{Vertical-Horizontal (VH) constraints}  (which allow, e.g., constraints like ``every item can be allocated to at most one agent'' or ``the set of items allocated to agent $i$ must form a matroid''), it is possible to construct tCRSs (resp. tOCRSs) given CRSs (resp. OCRSs) in a black-box manner. 

\begin{definition}[VH Constraints]\label{dfn: vh consrtaints}
    We call $\Fcal$ a Vertical-Horizontal (VH) constraint with respect to a ground set $N = \{ e_{i,j} \}_{i \in [n], j \in [m]}$ if there exist sets of constraints $\{\Fcal_i\}_{i \in [n]}$, $\{\Fcal^j\}_{j \in [m]}$ such that $I \in \Fcal$ iff (i) $\forall i \in [n]$, $I\cap \{ e_{i,j} \}_{j \in [m]} \in \Fcal_i$, (ii) $\forall j \in [m]$, $I\cap \{ e_{i,j} \}_{i \in [n]} \in \Fcal^j$.
\end{definition}

\begin{theorem} \label{theorem:vertical-horizontal}
    Given $(b,c)$-selectable CRSs (resp. OCRSs) for constraints $\{\Fcal_i\}_{i \in [n]}$ and $(b,c')$-selectable CRSs (resp. OCRSs) for constraints $\{\Fcal^j\}_{j \in [m]}$, there exists a ($b,c\cdot c'$)-selectable tCRS (resp. tOCRS) for the induced Vertical-Horizontal constraint $\Fcal$.
\end{theorem}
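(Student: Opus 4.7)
The idea is a two-stage reduction: first use the given row schemes to enforce each $\Fcal_i$ on its respective row, then use the column schemes to enforce each $\Fcal^j$ on its respective column, with an intermediate downsampling step that normalizes per-element marginals. The key structural fact is that in any two-level stochastic process $(\Dcal,x)$, distinct rows depend on disjoint randomness, so any intra-row processing preserves independence of inclusions across $i$ for each fixed column $j$---which is exactly the input distribution the column scheme expects. Throughout, let $w_{i,j}=\EX{d_i\sim\Dcal_i}{x_{i,j}(d_i)}$ and let $w^j$ denote the projection of $w$ onto column $j$.

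\emph{Row stage.} Conditional on $d_i$, elements in row $i$ are active independently with probabilities $b\,x_{i,j}(d_i)$, and feasibility of $(\Dcal,x)$ together with the VH assumption yields $x_{i,\cdot}(d_i)\in P_{\Fcal}^i\subseteq P_{\Fcal_i}$. I run the $(b,c)$-selectable row scheme for $\Fcal_i$ on these active elements to obtain $S_i\in\Fcal_i$ with $\Pr[e_{i,j}\in S_i\mid e_{i,j}\text{ active},\,d_i]\ge c$. Taking expectation over $d_i$ yields $p_{i,j}:=\Pr[e_{i,j}\in S_i]\ge c\cdot b\cdot w_{i,j}$. In the OCRS variant the row OCRS is invoked inline as agent $i$'s batch of $m$ elements is revealed, making its irrevocable choices before the batch ends.

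\emph{Normalization and column stage.} Independently keep each $e_{i,j}\in S_i$ with probability $q_{i,j}:=(c\,b\,w_{i,j})/p_{i,j}\in[0,1]$, producing $S_i'$ with $\Pr[e_{i,j}\in S_i']=c\,b\,w_{i,j}$ exactly. Since rows are independent, for each fixed $j$ the indicators $\{\mathbbm{1}[e_{i,j}\in S_i']\}_{i\in[n]}$ are independent, so the column-$j$ survivors are distributed as $R(b\cdot(c\,w^j))$. By the VH property, every characteristic vector of $\Fcal$ projects to a characteristic vector of $\Fcal^j$, hence $w^j\in P_{\Fcal^j}$; by convexity and $\mathbf{0}\in P_{\Fcal^j}$ (downward-closedness), $c\,w^j\in P_{\Fcal^j}$. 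I then apply the $(b,c')$-selectable column scheme for $\Fcal^j$ on this input to obtain $I_j\in\Fcal^j$, and output $I=\{e_{i,j}:i\in I_j,\,j\in[m]\}$. In the OCRS variant, as soon as $e_{i,j}$ is accepted by the row OCRS and survives downsampling, I hand it to the column-$j$ OCRS, which makes its final irrevocable decision before the tOCRS moves on; since column-$j$ elements arrive in the same order as agents, this is a valid online invocation.

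\emph{Wrap-up and main obstacle.} Row-wise, $I\cap\{e_{i,j}\}_{j\in[m]}\subseteq S_i\in\Fcal_i$, and column-wise, $I\cap\{e_{i,j}\}_{i\in[n]}=\{e_{i,j}:i\in I_j\}\in\Fcal^j$, so $I\in\Fcal$ by the VH characterization (using downward-closedness of $\Fcal_i$ for the inclusion). For selectability,
\[
\Pr[e_{i,j}\in I\mid e_{i,j}\text{ active}]\;=\;\frac{\Pr[e_{i,j}\in S_i']\cdot\Pr[e_{i,j}\in I\mid e_{i,j}\in S_i']}{b\,w_{i,j}}\;\ge\;\frac{c\,b\,w_{i,j}\cdot c'}{b\,w_{i,j}}\;=\;c\cdot c'.
\]
The main subtlety is the normalization step: without it, the column scheme would see survivors with marginals $p_{i,j}$ only lower-bounded by $c\,b\,w_{i,j}$, and the resulting vector need not lie in $b\cdot P_{\Fcal^j}$, invalidating the column scheme's guarantee. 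Downsampling to the exact marginals $c\,b\,w_{i,j}$, combined with the downward-closedness of $P_{\Fcal^j}$, is exactly what makes the column scheme's input legitimate.
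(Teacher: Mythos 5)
Your proof is correct, but it takes a genuinely different route from the paper's. The paper runs the row scheme $\mu_i$ and the column scheme $\mu^j$ \emph{in parallel} on the raw active set $R(\Dcal,bx)$ and keeps $e_{i,j}$ only if \emph{both} select it. Selectability then comes from a conditional-independence argument: conditioned on $e_{i,j}$ being active, the event ``row scheme selects $e_{i,j}$'' (which depends only on $d_i$ and row $i$'s other activations) and the event ``column scheme selects $e_{i,j}$'' (which depends only on the other rows' activations) are independent, so their joint probability is at least $c\cdot c'$. Your proof instead \emph{cascades} the schemes: row scheme first, then a downsampling step that equalizes each survivor's marginal to exactly $cbw_{i,j}$, then the column scheme on the survivors. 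The downsampling is the key move that replaces the paper's conditional-independence argument; it guarantees the column scheme receives a genuine product input $R(b(cw^j))$ with $cw^j\in P_{\Fcal^j}$ (by convexity with $\mathbf{0}$), so its $(b,c')$-guarantee applies verbatim rather than needing to be extracted ``pointwise.'' Both arguments are valid, use the same structural fact (disjoint randomness across rows), and give the same bound. The paper's version is shorter and avoids the normalization; yours makes the interface between the two scheme layers cleaner and more modular, at the cost of the extra downsampling step and an implicit assumption that the (known) marginals $p_{i,j}$ are available to compute the downsampling probabilities. One small observation: the normalization is not strictly necessary, since for downward-closed $\Fcal^j$ the polytope $P_{\Fcal^j}$ is coordinatewise downward-closed, so the unnormalized survivor marginals (which satisfy $p_{i,j}/b\le w_{i,j}$) also lie in $P_{\Fcal^j}$; but your uniform-scaling argument $cw^j\in P_{\Fcal^j}$ is simpler and makes the column input explicit.
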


\begin{proof}
    Let $(\Dcal,b \,x)$ be the two-level stochastic process through which elements become active. 
    Let $\mu_{i}$ be the ($b,c$)-selectable CRS/OCRS for constraint $\Fcal_i$, for $i \in [n]$, and let $\mu^{j}$ be the ($b,c'$)-selectable CRS/OCRS for constraint $\Fcal^j$, for $j \in [m]$. Given a set of active elements $R(\Dcal, b\, x)$ sampled from $(\Dcal, b\, x)$, our tCRS/tOCRS selects element $e_{i,j} \in R(\Dcal, b\, x)$ if (i) $\mu^{j}$ on input $R(\Dcal, b\, x)\cap \{ e_{i,j} \}_{i \in [n]}$ selects $e_{i,j}$, and (ii) $\mu_{i}$ on input $R(\Dcal, b\, x)\cap \{ e_{i,j} \}_{j \in [m]}$ selects $e_{i,j}$ (noting that this is process does make decisions online when $\mu^{j}$ and $\mu_i$ are OCRSs and are queried in an online fashion).

    
    Let $A_{i,j}$ be the event that $e_{i,j} \in R(\Dcal,b\, x)$. 
    By the definition of a two-level stochastic process, event $A_{i,j}$ is independent from event $A_{i',j}$, for all $j \in [m]$ and $i, i' \in [n]$ such that $i \neq i'$. Therefore, the CRSs/OCRSs  $\mu^{j}$, $j \in [m]$, are queried about elements that become active independently (as required by the definition of a CRS/OCRS). Now, overloading notation, let $A_{i,j}(d_i)$ be the event that $e_{i,j} \in R(\Dcal,b\, x)$ given that $d_i$ was sampled from $\Dcal_i$. By the definition of a two-level stochastic process, event $A_{i,j}(d_i)$ is independent from event $A_{i,j'}(d_i)$, for all $j \neq j' \in [m]$ and all $i \in [n]$. Therefore, the CRSs/OCRSs  $\mu_{i}$, $i \in [n]$, are queried about elements that become active independently (as required by the definition of a CRS/OCRS). Let $B_{i,j}$ be the event that $\mu^{j}$ selects an active element $e_{i,j} \in R(\Dcal, b\, x)$ on input $R(\Dcal, b\, x)\cap \{ e_{i,j} \}_{i \in [n]}$, and note that, since $\mu^{j}$ is a ($b,c'$)-selectable CRS/OCRS, we have that then $\Pr[B_{i,j}|A_{i,j}] \ge c'$. Similarly, let $C_{i,j}$ be the event that $\mu_{i}$ selects an active element $e_{i,j} \in R(\Dcal, b\, x)$ on input $R(\Dcal, b\, x)\cap \{ e_{i,j} \}_{j \in [m]}$; $\Pr[C_{i,j}|A_{i,j}] \ge c$. Finally, conditioned on $A_{i,j}$ events $B_{i,j}$ and $C_{i,j}$ are conditionally independent due to the definition of a two-level stochastic process. Thus, $\Pr[B_{i,j} \cap C_{i,j}|A_{i,j}] \ge c \, c'$, which concludes the proof. 
\end{proof}


By combining~\Cref{theorem:vertical-horizontal} with known results (e.g.,~\cite{hajiaghayi2007automated,lee2018optimal,kashaev2023simple}) we can get tCRSs and tOCRSs for various settings of interest; we show these applications in~\Cref{thm: from t to mechanisms} in~\Cref{sec:applications}. For the remainder of this section, we construct tOCRSs for specific knapsack constraints (that are not VH constraints, and therefore~\Cref{thm: from t to mechanisms} is not applicable). We start by defining knapsack and multi-choice knapsack.

\begin{definition}[Knapsack Constraints]
   Consider a ground set of elements $N = \{ e_{i,j} \}_{i \in [n], j \in [m]}$, where, for each $i , j \in [n] \times [m]$, element $e_{i,j}$ has a weight $k_{i,j}$, and there is a maximum weight $K$. We say that $\Fcal$ is a Knapsack constraint when $I \in \Fcal$ if and only if $I \subseteq N$, and $\sum_{(i,j): e_{i,j} \in I} k_{i,j} \le K$.
\end{definition}

\begin{definition}[Multi-Choice Knapsack]
   Consider a ground set of elements $N = \{ e_{i,j} \}_{i \in [n], j \in [m]}$, where, for each $i , j \in [n] \times [m]$, element $e_{i,j}$ has a weight $k_{i,j}$, and there is a maximum weight $K$. We say that $\Fcal$ is a Multi-Choice Knapsack constraint when $I \in \Fcal$ if and only if $I \subseteq N$, $\sum_{(i,j): e_{i,j} \in I} k_{i,j} \le K$, and, for all $i \in [n]$, $|I\cap \{e_{i,j}\}_{j \in [m]}| \le 1$.
\end{definition}

The following theorem states gives a $(b, \frac{1}{2+8b})$-selectable tOCRS for Knapsack Constraints, for $b \in [0,1]$. Interestingly, since tOCRSs are OCRSs, our result implies a $(1,0.1)$-selectable OCRS for knapsack; this is better than the $(1,0.085)$-selectable OCRS given by Feldman et al.~\cite{feldman2021online}, but not as good as the state-of-the-art $(1,1/(3+e^{-2}))$-selectable ($\approx (1,0.319)$-selectable) OCRS of Jiang et al.~\cite{jiang2022tight}.

\begin{theorem}
\label{theorem:knapsack}
    For all $b \in [0,1]$, there exists a $(b, \frac{1}{2+8b})$-selectable tOCRS for Knapsack.
\end{theorem}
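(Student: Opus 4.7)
The plan is to follow the classical heavy/light decomposition for knapsack-style OCRSs: declare an element \emph{heavy} if $k_{i,j}>K/2$ and \emph{light} otherwise, then run two subprocedures in disjoint randomized modes. Before designing the subprocedures I would record two consequences of the feasibility of the two-level stochastic process $(\Dcal, bx)$. Writing $w_{i,j}=\mathbb{E}_{d_i}[x_{i,j}(d_i)]$, the condition $w\in P_{\Fcal}$ gives $\sum_{i,j}k_{i,j}w_{i,j}\le K$, so (i) the expected total weight of active elements is at most $bK$, and (ii) because each heavy element contributes more than $K/2$, the expected number of active heavy elements is at most $2b$. Both bounds use only the unconditional distribution of activations, so the two-level dependence through $d_i$ enters only via linearity of expectation—this is the key feature that lets a CRS-style analysis go through in the $t$OCRS setting.

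The scheme I would use flips one global coin at the start. With probability $p_L$ it engages a \emph{light subprocedure} that traverses the batches online, sub-samples each active light element with probability $q_L$, and greedily accepts the sub-sampled element if adding its weight to the total accepted so far does not exceed $K$. With probability $p_H$ it engages a \emph{heavy subprocedure} that traverses the batches online, sub-samples each active heavy element with probability $q_H$, and accepts it iff no prior heavy has been accepted. The two modes are mutually exclusive, so the capacity is never shared. For the light mode, Markov's inequality applied to the at most $q_L bK$ expected selected weight, together with the fact that a light element needs only remaining capacity $K/2$, shows that each active light element is accepted with probability at least $q_L(1-2bq_L)$. For the heavy mode, Markov's inequality applied to the at most $2b$ expected active heavies shows that each active heavy is accepted with probability at least $q_H(1-2bq_H)$.

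Multiplying by the mode probability, each active element is selected with probability at least $p_L q_L(1-2bq_L)$ or $p_H q_H(1-2bq_H)$. With $p_L=p_H=1/2$ and a common choice $q=q_L=q_H$ tuned to $b$ (roughly $q=1$ for small $b$ and $q=\Theta(1/b)$ for large $b$), the remaining step is elementary algebra comparing $\tfrac12 q(1-2bq)$ with $1/(2+8b)$. The online and two-level structure is preserved throughout because neither subprocedure looks at a batch before it is revealed, and the intra-batch dependence enters the analysis only through unconditional expectations.

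The hard part will be the regime $b>1/4$, where the naive Markov bounds (for example $1-2b$ when $q=1$) degrade past zero. Closing this gap requires either a carefully chosen sub-sampling schedule or a refined within-mode rule—for instance, replacing the simple ``accept first active heavy'' rule by a magician/balanced selection that equalises the acceptance probability across heavies, and tuning $q_L$ against $b$ so that $q_L(1-2bq_L)$ is maximised at exactly the value needed for the inequality $\tfrac12 q(1-2bq)\ge 1/(2+8b)$ to hold uniformly on $[0,1]$. Once those parameters are pinned down, and one verifies that edge cases (elements whose weight is close to $K$, so that $w_{i,j}\approx 1$ forces almost all other $w_{i',j'}=0$) satisfy the bound trivially, the remaining algebra is routine.
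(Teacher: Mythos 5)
Your high-level plan matches the paper's: split elements at weight $K/2$ into heavy and light, run the two sub-schemes in disjoint randomized modes with probability $1/2$ each, bound expected active heavies by $2b$ and expected active light weight by $bK$, then apply Markov. That much is right. But the scheme you write down—uniform sub-sampling with probability $q$ plus greedy acceptance—provably cannot reach $1/(2+8b)$ once $b>1/4$. Optimizing $\tfrac12 q(1-2bq)$ over $q\in[0,1]$ gives a maximum of $\tfrac{1}{16b}$ at $q=1/(4b)$, and $\tfrac{1}{16b}<\tfrac{1}{2+8b}$ strictly for all $b>1/4$. So no choice of a fixed sub-sampling rate $q$ closes the gap; the ``carefully chosen sub-sampling schedule'' branch of your fallback is a dead end.

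You do name the correct fix in passing—the magician-style balanced acceptance rule—but you treat it as a loose end rather than the crux. In the paper's proof this is the entire content of the argument. The heavy scheme accepts an active heavy $e_{i,j}$, when $I$ is still empty, with the history-dependent probability $\tfrac{1}{(1+4b)\Pr[A_{i,j}(d_i)]}$, so that every active heavy is selected with probability exactly $\tfrac{1}{1+4b}$; the light scheme does the same with $\Pr[B_{i,j}(d_i)]$ in place of $\Pr[A_{i,j}(d_i)]$. The substantive work is then to prove the self-consistent lower bound $\Pr[A_{i,j}(d_i)]\ge \tfrac{1}{1+4b}$ and $\Pr[B_{i,j}(d_i)]\ge \tfrac{1}{1+4b}$, so that these are valid probabilities. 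For the heavy scheme this requires a within-batch recursion combined with a cross-agent telescoping step (the telescoping must average over $d_i$ exactly because the acceptance probability depends on $d_i$ inside a batch but $A_{i,\ell_i}(d_i)$ does not); for the light scheme it requires Markov applied to the \emph{balanced} expected weight $\tfrac{2b}{1+4b}K$, not the naive $q_L b K$. Neither of these is routine algebra, and the two-level structure is not merely ``entering via unconditional expectations''---the conditioning on $d_i$ is handled explicitly in the recursion. So the proposal identifies the right blueprint and the right obstacle, but it stops exactly where the theorem starts being nontrivial.
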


\begin{proof}
 Let $(\Dcal,b x)$ be the two-level stochastic process through which elements become active. Let $k_{i,j}$ be the weight of element $e_{i,j}$, and $K$ bet the total weight. Let $S_h = \{e_{i,j}: i\in [n], j \in [m]| k_{i,j}>K/2\}$ be the set of elements whose weight is at least $K/2$, the ``heavy elements,''  and let
 $S_{\ell} = \{e_{i,j}: i\in [n], j \in [m]| k_{i,j} \le K/2\}$ be the set of ``light elements.''  Our tOCRS is randomized: with probability $1/2$ we run a scheme that considers only heavy elements, the ``heavy scheme,'' and with probability $1/2$ we run a scheme that considers only light elements, the ``light scheme.'' In both cases, we use $I$ to indicate the set of elements we output. Without loss of generality (from the definition of a tOCRS) we assume that elements arrive in the order $e_{1,1}, e_{1,2}, \dots, e_{n,m}$.

 For the case of the heavy scheme, it is obvious that we can only take one heavy element. We initialize $I= \emptyset$ and consider elements sequentially. Let $A_{i,j}(d_i)$ be the event that $I$ is empty until element $e_{i,j}$ is considered when we run the heavy scheme and $d_i$ was sampled from the two-level stochastic process $(\Dcal,b x)$. If element $e_{i,j}$ is active, $e_{i,j} \in S_h$ and $I= \emptyset$, then with probability $\frac{1}{(1+4b)\Pr[A_{i,j}(d_i)]}$ we set $I \leftarrow \{e_{i,j}\}$, otherwise we move on to the next element. Assuming that $\frac{1}{(1+4b)\Pr[A_{i,j}(d_i)]}$ is a valid probability, each heavy element is selected with probability exactly $\Pr[A_{i,j}(d_i)]\frac{1}{(1+4b)\Pr[A_{i,j}(d_i)]} = \frac{1}{(1+4b)}$ when we run the heavy scheme. 
 Towards proving that $\Pr[A_{i,j}(d_i)] \ge \frac{1}{(1+4b)}$, assume, let $\ell_i = \mathrm{argmin} \{ j \in [m] | e_{i,j} \in S_h \}$ and $u_i = \mathrm{argmax} \{ j \in [m] | e_{i,j} \in S_h \}$ are the lowest and highest index heavy elements for agent $i$. For $e_{i,j} \in S_h$, we have that 

\begin{align*}
    \Pr[A_{i,j+1}(d_i)] &= \Pr[A_{i,j}(d_i) \cap e_{i,j} \notin I] \\
    &= \Pr[A_{i,j}(d_i)] \Pr[e_{i,j} \notin I | A_{i,j}(d_i)] \\
    &= \Pr[A_{i,j}(d_i)] \left( \underbrace{1 - bx_{i,j}(d_i) \vphantom{\left(1 - \frac{1}{(1+4b)\Pr[A_{i,j}(d_i)]}\right)}}_{\text{$e_{i,j}$ not active} } + \underbrace{bx_{i,j}(d_i)\left(1 - \frac{1}{(1+4b)\Pr[A_{i,j}(d_i)]}\right)}_{\text{$e_{i,j}$ active but not selected}} \right) \\
    &= \Pr[A_{i,j}(d_i)] - \frac{b}{1+4b} x_{i,j}(d_i).
\end{align*}

Recursing, we have $\Pr[A_{i,j+1}(d_i)] = \Pr[A_{i,\ell_i}(d_i)] - \frac{b}{1+4b} \sum_{\substack{j' \leq j: e_{i,j'} \in S_h}} x_{i,j'}(d_i)$. Towards computing $\Pr[A_{i,\ell_i}(d_i)]$, for $i \geq 1$ we have:

\begin{align*}
    &\Pr[A_{i+1,\ell_{i+1}}(d_{i+1})] = \sum_{d_i \in \Vcal_{i}} \Pr[ \Dcal_i = d_i] \Pr[A_{i,u_i}(d_i) \cap e_{i,u_i} \notin I | \Dcal_i = d_i]\\
    &= \sum_{d_i \in \Vcal_{i}} \Pr[\Dcal_i = d_i]\left( \Pr[A_{i,\ell_i}(d_i)] - \frac{b}{1+4b} \sum_{\substack{j': e_{i,j'}\in S_h}} x_{i,j'}(d_i) \right)\\
    &= \Pr[A_{i,\ell_i}(d_i)] - \frac{b}{1+4b} \sum_{\substack{j': e_{i,j'}\in S_h}} \sum_{d_i \in \Vcal_{i}} \Pr[\Dcal_i = d_i] \, x_{i,j'}(d_i) \tag{$A_{i,\ell_i}(d_i)$ does not depend on $d_i$}\\
    &= 1- \frac{b}{1+4b} \sum_{i' \leq i} \sum_{\substack{j': e_{i',j'}\in S_h}} w_{i,j}, \tag{telescoping}   
\end{align*}

  where $w_{i,j} = \sum_{d_{i} \in \Vcal_i} \Pr[\Dcal_{i} = d_{i}] \, x_{i,j}(d_{i})$, is the probability that $e_{i,j}$ is active in $(\Dcal,b x)$. Since $K \geq \sum_{\substack{i, j: e_{i,j} \in S_h}} x_{i,j}(d_i) k_{i,j} > \sum_{\substack{i, j: e_{i,j} \in S_h}} x_{i,j}(d_i) \cdot \frac{K}{2}$, we have $\sum_{\substack{i, j: e_{i,j} \in S_h}} x_{i,j}(d_i) < 2$. Hence, it also holds that 
$\sum_{\substack{i,j: e_{i,j} \in S_h}} w_{i,j} < 2$. Combining with the above we have that:
\vspace{-3mm}
\begin{align*}
    \Pr[A_{i,j+1}(d_i)] &= 1- \frac{b}{1+4b} \left(\sum_{i' \leq i} \sum_{\substack{j': e_{i,j'}\in S_h}} w_{i',j'} +  \sum_{\substack{j' < j+1:\\e_{i,j'}\in S_h}} x_{i,j'}(d_i)\right) > 1-\frac{4b}{1+4b} = \frac{1}{1+4b}.
\end{align*}

For the case of the light scheme, we again initialize $I = \emptyset$. Let $B_{i,j}(d_i)$ be the event that, at the time step when element $e_{i,j}$ is considered, the total weight of elements in $I$ so far is strictly less than $K/2$, when $d_i$ was sampled from the two-level stochastic process $(\Dcal, bx)$. We consider each (light) element $e_{i,j}$ one at a time, and if $e_{i,j}$ is active and the weight of elements in $I$ is less than $K/2$, we set $I \leftarrow I \cup \{e_{i,j}\}$ with probability $\frac{1}{(1+4b)\Pr[B_{i,j}(d_i)]}$; otherwise we move on to the next element. Again, if $\frac{1}{(1+4b)\Pr[B_{i,j}(d_i)]}$ is a valid probability, each light element is selected with probability exactly $\Pr[B_{i,j}(d_i)]\frac{1}{(1+4b)\Pr[B_{i,j}(d_i)]} = \frac{1}{(1+4b)}$ when we run the light scheme. It therefore remains to prove that $\Pr[B_{i,j}(d_i)] \ge \frac{1}{(1+4b)}$. Let $W_{i,j}(d_i)$ be the random variable that represents the total weight of elements in $I$ at the time when we consider elements $e_{i,j}$ when $d_i$ was the sample from $(\Dcal, bx)$. We have that:
 \begin{align*}
     \Ex{W_{i,j}(d_i)} &= \underbrace{\sum_{i'<i} \sum_{d_{i'} \in \Vcal_{i'}} \Pr[\Dcal_{i'} = d_{i'}] \sum_{j': e_{i',j'} \in S_{\ell}} \frac{1}{(1+4b)\Pr[B_{i',j'}(d_{i'})]} \Pr[B_{i',j'}(d_{i'})]bx_{i',j'}(d_{i'})k_{i',j'}}_{\text{Contribution from agents before $i$}} \\
     &+\underbrace{\sum_{\substack{j'<j:\\ e_{i,j'} \in S_{\ell}}} \frac{1}{(1+4b)\Pr[B_{i,j'}(d_{i})]}\Pr[B_{i,j'}(d_{i})] bx_{i,j'}(d_{i})k_{i,j'}}_{\text{Contribution from $i$'s items before $j$}}\\
     &= \frac{b}{1+4b}\left(\sum_{i'<i} \sum_{j': e_{i',j'} \in S_{l}} w_{i',j'}k_{i',j'}+  \sum_{\substack{j'<j:\\ e_{i,j'} \in S_{l}}} x_{i,j'}(d_{i})k_{i,j'}\right) \\
     &\le  \frac{2b}{1+4b}K. \tag{Feasibility of $(\Dcal,bx)$}
\end{align*}
Therefore, we have
\begin{align*}
    \Pr[B_{i,j}(d_i)] &= \Pr[W_{i,j}(d_i) < K/2] \\
    &= 1-\Pr[W_{i,j}(d_i) \ge K/2] \\
    &\ge 1 - \frac{\frac{2b}{1+4b}K}{K/2} \tag{Markov's Inequality}\\
    &= \frac{1}{1+4b}.
\end{align*}
Since we run each scheme with probability $1/2$, for an element $e_{i,j} \in S_{h}$ that is active we have:
\begin{align*}
    \Pr[e_{i,j} \in I] &= \Pr[\text{``heavy scheme''}] \Pr[e_{i,j} \in I|\text{``heavy scheme''}] \ge \frac{1}{2}\frac{1}{1+4b} = \frac{1}{2+8b}.
\end{align*}
Similarly, for an active element $e_{i,j} \in S_{\ell}$, $\Pr[e_{i,j} \in I] \geq 1/(2+8b)$, concluding the proof.
\end{proof}

Next, we give a $(b, \frac{1}{2+7b})$-selectable tOCRS for Multi-Choice constraints, for all $b \in [0,1]$. The proof of the following theorem is deferred to~\Cref{app: missing from tCRS const}.

\begin{theorem} \label{theorem:knapsack unit demand}
    For all $b \in [0,1]$, there exists a $(b, \frac{1}{2+7b})$-selectable tOCRS for Multi-Choice Knapsack.
\end{theorem}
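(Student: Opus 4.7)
My plan is to adapt the proof of Theorem~\ref{theorem:knapsack}, exploiting the unit-demand feasibility constraint of Multi-Choice Knapsack (namely $\sum_{j} x_{i,j}(d_i)\le 1$) to improve the constant from $\frac{1}{2+8b}$ to $\frac{1}{2+7b}$. Following the Knapsack template, I would partition the ground set into heavy elements $S_h = \{e_{i,j}:k_{i,j}>K/2\}$ and light elements $S_\ell$, and randomize with probability $1/2$ each between a ``heavy scheme'' that selects at most one heavy element in total and a ``light scheme'' that greedily accepts light elements while the cumulative selected weight stays below $K/2$. The tuning constant in both schemes would be $a = 7/2$ in place of the $4$ used for Knapsack, so that each active element is selected with probability exactly $\frac{1}{1+(7/2)b}$ conditional on its scheme being run, yielding overall selectability $\frac{1}{2(1+(7/2)b)} = \frac{1}{2+7b}$.

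The heavy scheme is essentially unchanged: on an active heavy element $e_{i,j}$ with no heavy element yet selected, include it with probability $\frac{1}{(1+(7/2)b)\Pr[A_{i,j}(d_i)]}$, where $A_{i,j}(d_i)$ is the event that no heavy element has been selected yet. The same telescoping argument gives $\Pr[A_{i,j}(d_i)] \ge 1 - \frac{b}{1+(7/2)b}\sum_{e\in S_h} w_e$, and since every heavy element has weight strictly greater than $K/2$, knapsack feasibility $\sum_{i',j'}w_{i',j'}k_{i',j'}\le K$ forces $\sum_{e\in S_h}w_e < 2 \le 7/2$, whence $\Pr[A_{i,j}(d_i)] \ge \frac{1}{1+(7/2)b}$.

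The light scheme is where the unit-demand constraint pays off. I would define $B_{i,j}(d_i)$ as the joint event that (i) the cumulative selected weight so far is strictly less than $K/2$ and (ii) no earlier element $e_{i,j'}$ (with $j'<j$) from agent $i$ has been selected, and include an active $e_{i,j}\in S_\ell$ with probability $\frac{1}{(1+(7/2)b)\Pr[B_{i,j}(d_i)]}$. To lower bound $\Pr[B_{i,j}(d_i)]$, let $E_1 = \{W \ge K/2\}$ and $E_2 = \{\text{some } e_{i,j'} \text{ with } j'<j \text{ has been selected}\}$. The unit-demand feasibility gives $\Pr[E_2\mid d_i] \le \sum_{j'<j:\,e_{i,j'}\in S_\ell}\frac{bx_{i,j'}(d_i)}{1+(7/2)b} \le \frac{b}{1+(7/2)b}$. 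Crucially, on $\neg E_2$ the agent-$i$ contribution to $W$ vanishes, so using inductively that each earlier $e_{i',j'}$ is unconditionally selected with probability exactly $\frac{bx_{i',j'}(d_{i'})}{1+(7/2)b}$,
\[
\Ex{W\mid \neg E_2, d_i} \le \sum_{i'<i}\sum_{j':\,e_{i',j'}\in S_\ell} \frac{bw_{i',j'}k_{i',j'}}{1+(7/2)b} \le \frac{bK}{1+(7/2)b}.
\]
Markov's inequality then gives $\Pr[E_1\mid \neg E_2, d_i]\le \frac{2b}{1+(7/2)b}$, and the decomposition $\Pr[E_1\cup E_2\mid d_i]=\Pr[E_2\mid d_i]+\Pr[E_1\mid \neg E_2, d_i]\Pr[\neg E_2\mid d_i]$ yields $\Pr[\neg B_{i,j}(d_i)\mid d_i]\le \frac{3b}{1+(7/2)b}\le \frac{(7/2)b}{1+(7/2)b}$ (using $3\le 7/2$), so $\Pr[B_{i,j}(d_i)] \ge \frac{1}{1+(7/2)b}$ as required.

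The main obstacle is making the conditioning on $\neg E_2$ rigorous: I must verify that $\neg E_2$ (an event determined by the randomness along agent $i$'s row of the two-level process, together with agent $i$'s own activations) is independent of the selection decisions for agents $i'<i$, which follows from the product structure of the outer distribution $\mathcal{D}$; and I must carry the inductive claim that each earlier $e_{i',j'}$ is selected with probability exactly $\frac{bx_{i',j'}(d_{i'})}{1+(7/2)b}$ in lockstep with the probability lower bound on $\Pr[B_{i',j'}(d_{i'})]$, exactly as in the knapsack proof.
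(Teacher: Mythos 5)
Your high-level strategy is a valid alternative to the paper's: you run the heavy and light schemes with probability $1/2$ each and use a single tuning constant $1+(7/2)b$ in both, whereas the paper runs the heavy scheme with probability $\frac{1+4b}{2+7b}$ and the light scheme with probability $\frac{1+3b}{2+7b}$, using constants $1+4b$ and $1+3b$ respectively. Both parameterizations deliver $\frac{1}{2+7b}$, and yours is arguably tidier, but two of your detailed justifications have genuine gaps.

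First, the heavy scheme. You claim the telescoping gives $\Pr[A_{i,j}(d_i)] \ge 1 - \frac{b}{1+(7/2)b}\sum_{e\in S_h} w_e$. That is not what the telescoping produces: the exact identity has the form $1 - \frac{b}{1+(7/2)b}\bigl(\sum_{i'<i}\sum_{j':e_{i',j'}\in S_h}w_{i',j'} + \sum_{j'<j:e_{i,j'}\in S_h}x_{i,j'}(d_i)\bigr)$, where the second term is a \emph{realized} activation probability, not its expectation $w_{i,j'}$, and can be larger than the corresponding $w$'s. Without the multi-choice constraint, that bracket is only bounded by $4$, which is too large for $a=7/2$ (indeed $1-\frac{4b}{1+(7/2)b} < \frac{1}{1+(7/2)b}$ for $b>0$). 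You must invoke $\sum_{j'} x_{i,j'}(d_i)\le 1$ to bound the second sum by $1$, giving a total bound of $3\le 7/2$; the multi-choice constraint is therefore essential to the heavy scheme, not just the light one, and should be stated.

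Second, the light scheme. Your claim $\Ex{W\mid\neg E_2,d_i}\le\frac{bK}{1+(7/2)b}$ does not follow from the \emph{unconditional} per-element selection probabilities, and your proposed rescue --- that $\neg E_2$ is independent of the earlier agents' selections because of the product structure of $\Dcal$ --- is false. The event $\neg E_2$ (no $e_{i,j'}$ with $j'<j$ was selected) depends on the events $B_{i,j'}(d_i)$, which include the condition $W<K/2$, and $W$ at that time is precisely the weight accumulated from agents $i'<i$. In fact, if $W\ge K/2$ then no agent-$i$ item can be selected, so $\neg E_2$ is \emph{positively} correlated with $W$; the conditional expectation $\Ex{W\mid\neg E_2,d_i}$ can exceed $\Ex{W}$. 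The paper avoids this by decomposing in the opposite order: it bounds $\Pr[B_i]=\Pr[W<K/2]$ unconditionally (here $B_i$ depends only on agents $i'<i$, so no conditioning issue arises), and then tracks $\Pr[C_{i,j}(d_i)\mid B_i]$ by a telescoping identity. Your argument can be repaired without changing the parameters: simply write $\Pr[\neg B_{i,j}(d_i)\mid d_i] = \Pr[E_2\mid d_i] + \Pr[E_1\cap\neg E_2\mid d_i]$, observe $\Pr[E_1\cap\neg E_2\mid d_i]\le\Pr[E_1\mid d_i]=\Pr[W\ge K/2]$ (where $W$ is the weight from agents $i'<i$ only, independent of $d_i$), and apply Markov with the unconditional $\Ex{W}\le\frac{bK}{1+(7/2)b}$. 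This gives $\Pr[\neg B_{i,j}(d_i)]\le\frac{3b}{1+(7/2)b}\le\frac{(7/2)b}{1+(7/2)b}$ as desired, without ever touching a conditional expectation given $\neg E_2$.
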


\subsection{Efficient Implementation Considerations}

~\Cref{theorem:knapsack} and~\Cref{theorem:knapsack unit demand} show that ``Knapsack'' and ``Multi-Choice Knapsack'' tOCRSs exist. Both tOCRSs have non-constructive coin-flipping steps (e.g., selecting an active light element with probability $\frac{1}{(1+4b)\Pr[B_{i,j}(d_i)]}$, where $B_{i,j}(d_i)$ is the event that, at the time step when element $e_{i,j}$ is considered, the total weight of elements in $I$ so far is strictly less than $K/2$, when $d_i$ was sampled from the two-level stochastic process $(\Dcal, bx)$. In this section, we show how to efficiently implement these steps, albeit with a small loss in the performance guarantee.

\begin{proposition} \label{proposition:efficient implimentation knapasack}
    We can implement a $\left( b, \frac{1}{2+8b}\left(\frac{1-\delta}{1+10 \epsilon}\right) \right)$-selectable tOCRS for Knapsack in time $poly(1/\epsilon^2, \log (1/\delta), m, n)$.
\end{proposition}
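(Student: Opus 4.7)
The obstacle in implementing the tOCRS of \Cref{theorem:knapsack} is that the selection probabilities $\frac{1}{(1+4b)\Pr[A_{i,j}(d_i)]}$ (heavy scheme) and $\frac{1}{(1+4b)\Pr[B_{i,j}(d_i)]}$ (light scheme) are defined via events whose probabilities have no closed-form expression, since they depend on the entire random trajectory of the scheme on all previously seen elements. The plan is to replace these probabilities by Monte Carlo estimates, with a slightly conservative normalization that absorbs estimation error, and then re-run the analysis of \Cref{theorem:knapsack} on the perturbed scheme.

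Concretely, when the scheme reaches element $e_{i,j}$ after observing the sample $d_i$, we would run $T = O(\epsilon^{-2} \log(nm/\delta))$ independent forward simulations of the scheme's behavior on the prefix of elements $(i',j')$ preceding $(i,j)$, drawing fresh $d_{i'} \sim \Dcal_{i'}$ for $i' < i$ and fresh activation coins, and tallying the fraction $\hat{p}_{i,j}$ of simulations in which the corresponding indicator event ($A_{i,j}$ or $B_{i,j}$) occurs. Each of the true probabilities is lower bounded by $\tfrac{1}{1+4b} \ge \tfrac{1}{5}$ in the analysis of \Cref{theorem:knapsack}, so a multiplicative Chernoff bound with $T = O(\epsilon^{-2}\log(nm/\delta))$ samples ensures $\hat{p}_{i,j} \in [(1-\epsilon)p_{i,j},\,(1+\epsilon)p_{i,j}]$ simultaneously for all $nm$ invocations with probability at least $1-\delta$. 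Then, instead of using selection probability $\tfrac{1}{(1+4b)\hat{p}_{i,j}}$, which might exceed $1$, we use $\tfrac{1-\delta}{(1+4b)(1+\epsilon)\hat{p}_{i,j}}$, which on the good event is a valid probability and at least $\tfrac{1-\delta}{(1+4b)(1+\epsilon)^2 p_{i,j}}$.

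Next, I would redo the inductive computations of \Cref{theorem:knapsack} on this perturbed scheme. In the telescoping recursion for $\Pr[A_{i,\ell_i}(d_i)]$, the term $\tfrac{b}{1+4b}$ is replaced by $\tfrac{b(1-\delta)}{(1+4b)(1+\epsilon)^2}$, which only \emph{decreases} the drift, so the lower bound $\Pr[A_{i,j}(d_i)] \ge \tfrac{1}{1+4b}$ still holds (in fact improves). The same is true for the Markov step bounding $\Pr[B_{i,j}(d_i)]$, since $\Ex{W_{i,j}(d_i)}$ is only scaled down by the same factor. Hence each active element is selected, conditional on activation, with probability at least
\[
\frac{1}{1+4b}\cdot \frac{1-\delta}{(1+4b)(1+\epsilon)^2 p_{i,j}}\cdot p_{i,j} \;\cdot\;\frac{1}{2} \;\geq\; \frac{1}{2+8b}\cdot \frac{1-\delta}{(1+\epsilon)^2}\;\geq\;\frac{1}{2+8b}\cdot\frac{1-\delta}{1+10\epsilon},
\]
for small enough $\epsilon$, conditioned on the good event that all estimates are accurate, matching the claimed bound. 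The $1/2$ factor comes from randomizing between the heavy and light sub-schemes as in \Cref{theorem:knapsack}.

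The total running time is dominated by the $nm$ estimation steps, each of which requires $T = O(\epsilon^{-2}\log(nm/\delta))$ simulations of a prefix of length at most $nm$, so the scheme runs in time $\mathrm{poly}(1/\epsilon^2,\log(1/\delta),n,m)$ as claimed (after absorbing logs of $nm$ into the polynomial). The main obstacle is that the estimated probabilities are plugged back into a scheme whose dynamics depend on those very probabilities, so one must be careful to argue that the ``true'' probabilities $p_{i,j}$ of the \emph{perturbed} scheme (not the original) still satisfy the required lower bounds; the key observation is that the perturbation only reduces the per-step selection probability, which monotonically improves the survival events $A$ and reduces the expected accumulated weight in the Markov bound for $B$, so the inductive arguments of \Cref{theorem:knapsack} go through unchanged.
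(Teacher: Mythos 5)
Your proposal follows essentially the same approach as the paper: replace the unknown conditional probabilities by Monte Carlo estimates, apply a Chernoff--Hoeffding bound with a union bound over all $nm$ invocations, slightly deflate the resulting coin so that it remains a valid probability on the good event, and then re-run the drift/Markov analysis of \Cref{theorem:knapsack} on the perturbed scheme, using the key observation (which you state correctly at the end) that the perturbation only reduces per-step selection probabilities, so the survival events can only become more likely and the expected accumulated weight can only shrink.

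There are a few differences and small slips worth noting. First, the paper observes that the heavy-scheme probabilities $\Pr[A_{i,j}(d_i)]$ have the explicit closed form derived in the proof of \Cref{theorem:knapsack} (they telescope to a linear expression in the known $x$'s and $w$'s), so only the light-scheme probabilities $\Pr[B_{i,j}(d_i)]$ actually require estimation; your proposal estimates both, which is harmless but unnecessary and slightly obscures why the heavy side stays exactly $\frac{1}{1+4b}$. Second, in your displayed final bound the left-hand side
$\frac{1}{1+4b}\cdot\frac{1-\delta}{(1+4b)(1+\epsilon)^2 p_{i,j}}\cdot p_{i,j}\cdot\frac{1}{2}$
has an extraneous leading factor $\frac{1}{1+4b}$; after cancelling $p_{i,j}$ the expression equals $\frac{1-\delta}{2(1+4b)^2(1+\epsilon)^2}$, which is \emph{smaller} than the claimed $\frac{1}{2+8b}\cdot\frac{1-\delta}{(1+\epsilon)^2}$, so the inequality as written is false; the conclusion is right, the leading $\frac{1}{1+4b}$ just should not be there. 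Third, placing the $(1-\delta)$ factor inside the coin probability is conceptually different from where the paper gets it: in the paper, the $(1-\delta)$ comes only from conditioning on the high-probability ``all estimates accurate'' event, and the per-coin deflation is achieved by the additive $+\epsilon$ shift in the denominator. If you both deflate the coin by $(1-\delta)$ and then condition on the good event, you pick up $(1-\delta)^2$, and your bound only matches the target $\frac{1-\delta}{1+10\epsilon}$ under an additional (unstated) relation between $\delta$ and $\epsilon$. It would be cleaner to either drop the $(1-\delta)$ from the coin (replacing it with a harmless clip at $1$ on the bad event, which does not threaten feasibility) and take $(1-\delta)$ solely from the good-event conditioning, or to mirror the paper's additive $+\epsilon$ construction exactly.
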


\begin{proof}
    In the proof of~\Cref{theorem:knapsack} we give an exact formula for the probability that $A_{i,j}(d_i)$ occurs; therefore, the only step we cannot directly implement from the procedure outlined in the proof of~\Cref{theorem:knapsack} is the toss of the $\frac{1}{(1+4b)\Pr[B_{i,j}(d_i)]}$ coin. Even when given a $\Pr[B_{i,j}(d_i)]$-coin, using a Bernoulli factory for division to produce a $\frac{1}{(1+4b)\Pr[B_{i,j}(d_i)]}$-coin results in an exponential blow-up in computation, since the factory for the $k$-th coin would need to also simulate the factory for the $(k-1)$-st coin, and so on. Instead, we approximate these probabilities, sequentially, using multiple experiments and bounding the error using Chernoff bounds. 

    In order to decide whether to select some element $e_{i,j} \in S_{\ell}$, we repeatedly simulate our algorithm until element $e_{i,j}$, for $T = \frac{1}{2\epsilon^2}\log \frac{2nm}{\delta}$ repetitions, where the choice of running the ``light scheme'' and $d_i$ are fixed. In this simulation, the coins needed to make decisions until element $e_{i,j}$ are replaced with estimated coins (described shortly). Let $X_t$ be the indicator random variable for the event that $B_{i,j}(d_i)$ occurred at simulation $t \in [T]$. We select element $e_{i,j}$ (when it is active) with probability $\frac{1}{(1+4b)\left(\frac{1}{T}\sum_{t \in [T]} X_t+ \epsilon\right)}$.
    Standard Chernoff–Hoeffding bounds~\cite{hoeffding1994probability} imply that
    \begin{align*}
        \Pr\left[ \left|\frac{1}{T}\sum_{t \in [T]} X_t - \Pr[B_{i,j}(d_i)] \right|>\epsilon \right]& \le 2\exp\left(-2\epsilon^2 T \right) =2\exp\left(-2\epsilon^2 \frac{1}{2\epsilon^2}\log \frac{2nm}{\delta} \right) \le \frac{\delta}{nm}.
    \end{align*}
    Assuming that $\left|\frac{1}{T}\sum_{t \in [T]} X_t - \Pr[B_{i,j}(d_i)] \right| \le \epsilon$, $\left(\frac{1}{T}\sum_{t \in [T]} X_t+ \epsilon \right) \in [\Pr[B_{i,j}(d_i)], \Pr[B_{i,j}(d_i)]+2\epsilon]$. 
    
    When bounding $\Ex{W_{i,j}(d_i)}$ (in the proof of~\Cref{theorem:knapsack}), the term $\frac{1}{(1+4b)\Pr[B_{i,j}(d_i)]} \cdot \Pr[B_{i,j}(d_i)]$ is replaced with $\frac{\Pr[B_{i,j}(d_i)]}{(1+4b)\left(\frac{1}{T}\sum_{t \in [T]} X_t+ \epsilon\right)}$, which is at most $\frac{1}{1+4b}$. Thus $\Ex{W_{i,j}(d_i)} \le \frac{2b}{1+4b}K$, which gives us $\Pr[B_{i,j}(d_i)]\ge \frac{1}{1+4b} \ge 1/5$ using the same arguments presented in the proof of \Cref{theorem:knapsack}. Thus, the probability that an active, light element $e_{i,j}$ is selected is
    \[\frac{\Pr[B_{i,j}(d_i)]}{(1+4b)\left(\frac{1}{T}\sum_{t \in [T]} X_t+ \epsilon\right)} \ge \frac{1}{1+4b} \left(\frac{\Pr[B_{i,j}(d_i)]}{\Pr[B_{i,j}(d_i)]+2\epsilon} \right) \ge \frac{1}{1+4b} \left(\frac{1}{1+10 \epsilon} \right).\]
    Using a union bound we have that
    \[\Pr\left[ \exists (i,j) \in [n]\times [m]: \left|\frac{1}{T}\sum_{t \in [T]} X_t - \Pr[B_{i,j}(d_i)] \right|>\epsilon \right] \le \delta.\]
    Thus, we overall have that with probability at least $1-\delta$, when we run the light scheme, each active element will be selected with probability at least $\frac{1}{1+4b} \left(\frac{1}{1+10 \epsilon} \right)$.   
\end{proof}

\begin{proposition}\label{prop: efficient knapsack unit demand}
    We can implement a $\left( b, \frac{1}{2+7b}\left(\frac{1-\delta}{1+8 \epsilon}\right) \right)$-selectable tOCRS for Multi-Choice Knapsack in time $poly(1/\epsilon^2, \log (1/\delta), m, n)$.
\end{proposition}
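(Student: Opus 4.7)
The plan is to mirror the argument used in the proof of~\Cref{proposition:efficient implimentation knapasack}, adapting it to the Multi-Choice Knapsack tOCRS from~\Cref{theorem:knapsack unit demand}. Inspecting that tOCRS, the only steps that are not directly implementable in polynomial time are the coin flips whose biases involve probabilities of the form $\Pr[B_{i,j}(d_i)]$ (for the light-element subroutine) and analogous ``this slot is still empty'' probabilities for the heavy-element subroutine and for enforcing the unit-demand constraint per agent. As in the Knapsack case, attempting to instantiate these coins via a Bernoulli factory for division would cause a cascading blowup, because the factory for a coin at step $(i,j)$ would need to recursively simulate every earlier coin. Instead, we replace each such coin by a Monte Carlo estimate.

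Concretely, before deciding on element $e_{i,j}$ we run $T = \tfrac{1}{2\epsilon^2}\log\tfrac{2nm}{\delta}$ independent simulations of the scheme up to (but not including) the decision on $e_{i,j}$, with the randomness of previously flipped (estimated) coins redrawn and the relevant $d_i$ and scheme-branch fixed. Writing $X_t$ for the indicator that the conditioning event holds in simulation $t$, we use $\hat{p} = \tfrac{1}{T}\sum_{t\in[T]} X_t + \epsilon$ in place of the true probability. A standard Chernoff--Hoeffding bound gives that $|\tfrac{1}{T}\sum_{t\in[T]} X_t - p| \leq \epsilon$ with probability at least $1 - \delta/(nm)$, and thus $\hat{p} \in [p, p+2\epsilon]$, so that the modified acceptance probability $\tfrac{1}{(1+\beta b)\hat{p}}$ (where $\beta$ matches the constant from~\Cref{theorem:knapsack unit demand}; here $\beta$ is the analog of the $4$ appearing in the Knapsack proof, chosen so that the resulting selectability is $\tfrac{1}{2+7b}$) is always at most $\tfrac{1}{(1+\beta b) p}$.

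This upper bound on the modified acceptance probability is precisely what is needed to reuse the analysis of~\Cref{theorem:knapsack unit demand} essentially verbatim: the expected total ``slot usage'' (the analog of $\Ex{W_{i,j}(d_i)}$) is bounded as before, which by Markov's inequality gives $p \geq \tfrac{1}{1+\beta b}$, a constant bounded away from $0$. Plugging this lower bound back into the estimated probability gives that each active element is selected with probability at least
\[
\frac{p}{(1+\beta b)(p+2\epsilon)} \;\geq\; \frac{1}{1+\beta b} \cdot \frac{1}{1+8\epsilon},
\]
where the $8\epsilon$ constant emerges from $2\epsilon/p \leq 2\epsilon (1+\beta b) \leq 8\epsilon$ using the specific constants of Multi-Choice Knapsack (as opposed to the $10\epsilon$ one obtains for plain Knapsack). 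A union bound over all $nm$ estimation steps shows the required selectability lower bound holds simultaneously for every element with probability at least $1-\delta$, yielding overall selectability at least $\tfrac{1}{2+7b}\cdot\tfrac{1-\delta}{1+8\epsilon}$.

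The main obstacle, and the reason some care is needed, is the potential for a cascading dependence: the simulations for step $(i,j)$ internally invoke the same estimated coins for all earlier steps, so one must verify that the worst-case inflation of each acceptance probability (caused by our additive overestimate $\hat{p} \geq p$) is what feeds into the Markov-style argument that bounds $p$ from below for later steps. The key point is that our estimator is a one-sided overestimate of the intended bias, which can only decrease acceptance probabilities and therefore only \emph{increase} the probability that the conditioning event (an empty slot or under-capacity knapsack) still holds later. This monotonicity means the lower bound $p \geq 1/(1+\beta b)$ propagates through the induction without issue, and the overall running time is $\mathrm{poly}(1/\epsilon^2,\log(1/\delta),m,n)$ since the Monte Carlo sample size is polynomial and each simulation itself is polynomial.
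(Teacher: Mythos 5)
Your proposal is correct and follows essentially the same route as the paper: Monte Carlo estimation of the unknown conditioning probability with $T = \tfrac{1}{2\epsilon^2}\log\tfrac{2nm}{\delta}$ samples, an additive $\epsilon$ shift to ensure a one-sided overestimate, a Chernoff--Hoeffding concentration bound, and a union bound over the $nm$ steps, with the overestimate feeding into Markov's inequality exactly as in the exact-probability proof. One small imprecision worth noting: in the paper only $\Pr[B_i]$ needs to be estimated, since the heavy scheme's $\Pr[A_{i,j}(d_i)]$ is given by an explicit formula (inherited from \Cref{theorem:knapsack}) and the conditional unit-demand factor $\Pr[C_{i,j}(d_i)\mid B_i]$ is likewise computable in closed form; the paper therefore estimates only $\Pr[B_i]$ and reconstructs $\Pr[B_i\cap C_{i,j}(d_i)]$ as $\hat{\ell}=\tfrac1T\sum_t X_t - \tfrac{b}{1+3b}\sum_{j'<j,\,j'\in S_\ell} x_{i,j'}(d_i)$, whereas your sketch suggests estimating the heavy-scheme and unit-demand probabilities as well, which is unnecessary (though harmless). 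Your explicit framing of the cascading-simulation concern as a monotonicity argument (overestimating the conditioning probability can only suppress acceptances, so the ``room left'' event is stochastically larger) is a clean way to articulate why the Markov-style bound propagates through the induction; the paper uses the same underlying inequality ($\tfrac{\Pr[\cdot]}{(1+3b)(\ell+\epsilon)}\le\tfrac{1}{1+3b}$) without calling it out as monotonicity.
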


The proof of~\Cref{prop: efficient knapsack unit demand} is similar to the proof of~\Cref{proposition:efficient implimentation knapasack} and is deferred to~\Cref{app: missing from tCRS const}.
\section{Applications}\label{sec:applications}

In this section, we show how to combine results from Sections~\ref{sec: mech from tCRS} and~\ref{sec:constructing tCRS/tOCRS} to get (sequential) mechanisms for various problems of interest.

First, our framework can give a sequential mechanism with a $\frac{2e}{e-1}$ ($\approx 3.16$) approximation guarantee for the problem of auctioning off $m$ items to $n$ agents with additive preferences, under ``matroid Vertical-Horizontal constraint'' constraints $\Fcal$: (i) the set of agents that each item $j$ is allocated to must form a matroid, and (ii) the set of items allocated to each agent $i$ must form a matroid. Observe that, $\Fcal$ is a VH constraint (\Cref{dfn: vh consrtaints}), induced by the aforementioned constraints (i) and (ii). Both (i) and (ii) are matroid constraints,~\cite{chekuri2014submodular} give a $(1,1-\frac{1}{e})$-selectable CRS and~\cite{lee2018optimal} give a $(1,1/2)$-selectable OCRS for matroid constraints. Therefore,~\Cref{theorem:vertical-horizontal} (where we use the CRS for the constraints over items and the OCRS for the constraints over agents) implies a $(1,\frac{e-1}{2e})$-selectable tOCRS for $\Fcal$:

\begin{corollary}[\Cref{theorem:vertical-horizontal} and~\cite{lee2018optimal}] \label{corollary: matroid}
    There exists a $(1,\frac{e-1}{2e})$-selectable tOCRS for matroid Vertical-Horizontal constraint constraints.
\end{corollary}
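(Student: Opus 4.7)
The plan is to recognize that the matroid Vertical-Horizontal setting fits exactly the hypothesis of Theorem~\ref{theorem:vertical-horizontal}, and then to instantiate the black-box pieces with the best known matroid CRS/OCRS. Concretely, I would first check that the constraint $\mathcal{F}$ described in the corollary is a VH constraint in the sense of Definition~\ref{dfn: vh consrtaints}: the ``horizontal'' constraints $\{\mathcal{F}_i\}_{i\in[n]}$ are matroid constraints on the items allocated to agent $i$, and the ``vertical'' constraints $\{\mathcal{F}^j\}_{j\in[m]}$ are matroid constraints on the agents to whom item $j$ is allocated. This verification is immediate from the statement.

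Next I would plug in the appropriate rounding schemes for each family. For $\{\mathcal{F}^j\}$, which must handle agent-batches arriving one at a time in the tOCRS arrival order, I would use the $(1,1/2)$-selectable matroid OCRS of Lee and Singla~\cite{lee2018optimal}. For $\{\mathcal{F}_i\}$, which need only resolve contention among the $m$ elements of a single agent's batch, I would use an offline matroid CRS with better selectability, namely the $(1,1-1/e)$-selectable scheme of Chekuri, Vondrák and Zenklusen~\cite{chekuri2014submodular}. The key observation that allows the offline scheme to be used here is that in a tOCRS the entire batch $\{e_{i,j}\}_{j\in[m]}$ for a fixed $i$ is revealed simultaneously, so all decisions pertaining to $\mathcal{F}_i$ can be made with full batch information; only decisions pertaining to $\mathcal{F}^j$ must respect the online order across agents.

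Finally I would invoke Theorem~\ref{theorem:vertical-horizontal} (with $b=1$, $c=1-1/e$ for the horizontal side, $c'=1/2$ for the vertical side) to conclude that there is a $(1,(1-1/e)\cdot 1/2)=(1,(e-1)/(2e))$-selectable tOCRS for $\mathcal{F}$. The argument in the theorem's proof --- selecting $e_{i,j}$ only if both $\mu^j$ and $\mu_i$ select it, together with the conditional independence of these two events given activation, guaranteed by the two-level stochastic process --- transfers verbatim, since having one side be a CRS and the other an OCRS does not interfere with that independence argument.

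The only subtle point, and the one I would spend the most care on, is confirming that the mixed CRS/OCRS instantiation is consistent with the proof of Theorem~\ref{theorem:vertical-horizontal}: the online schedule in the tOCRS definition processes batches one-by-one, so querying $\mu^j$ online across batches and querying $\mu_i$ as a single offline call on the $i$-th batch is compatible with the irrevocable-decision constraint of a tOCRS. Once this is observed, the selectability bound $(e-1)/(2e)$ follows by multiplying the two per-element marginal guarantees, exactly as in the theorem.
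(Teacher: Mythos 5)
Your proposal is correct and matches the paper's own reasoning: Theorem~\ref{theorem:vertical-horizontal} is instantiated with the $(1,1-1/e)$-selectable matroid CRS of~\cite{chekuri2014submodular} for the per-agent horizontal constraints $\{\Fcal_i\}$ and the $(1,1/2)$-selectable matroid OCRS of~\cite{lee2018optimal} for the per-item vertical constraints $\{\Fcal^j\}$, giving selectability $(1-1/e)\cdot\tfrac12=\tfrac{e-1}{2e}$. Your explicit check that a CRS may be used on the agent side because the entire batch $\{e_{i,j}\}_{j\in[m]}$ is revealed at once, while an OCRS is required on the item side where elements arrive across batches, is exactly the (lightly stated) point the paper relies on, and is the right subtlety to flag.
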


\Cref{corollary: matroid} and~\Cref{thm: from t to mechanisms} readily give the following result.

\begin{application}[\Cref{corollary: matroid} and \Cref{thm: from t to mechanisms}]\label{application: matroid constraint}
    Consider the problem of auctioning off $m$ items to $n$ agents with additive preferences, such that the set of agents that each item $j$ is allocated to must form a matroid, and the items allocated to each agent must form a matroid. There exists a sequential, BIC-IR mechanism that guarantees a $\frac{2e}{e-1}$ ($\approx 3.16$) approximation to the optimal revenue.
\end{application}

Notably, the previously best-known approximation guarantee for a sequential mechanism for even a special case of this problem (agents with preferences that are constrained additive with a matroid constraint, and each item can be allocated to at most one agent) was $70$~\cite{cai2017subadditive}, which we improve to $\frac{2e}{e-1} \approx 3.16$.

CRSs and OCRSs with approximation factors better than $2$ (i.e. better than the result of~\cite{lee2018optimal} for general matroids) are possible for some special cases. 
E.g., for $k$-uniform matroids,~\cite{hajiaghayi2007automated} give a $\left( 1 - O\left( \sqrt{\frac{\log k}{k}} \right) \right)$-selectable OCRS, and~\cite{kashaev2023simple} 
give a $\left( 1, 1 - \binom{n}{k} \left( 1- \frac{k}{n} 
\right)^{n+1-k} \left(\frac{k}{n}\right)^k \right)$-selectable CRS, 
where $n$ is the number of elements (for a fixed $k$, this approaches $\left(1 , 1 - e^{-k} \frac{k^k}{k!}\right)$). Combining with~\Cref{theorem:vertical-horizontal} we can get tOCRSs/tCRSs for these cases, and applying~\Cref{thm: from t to mechanisms} gives an overall (sequential for tOCRSs) mechanism with a slightly improved guarantee.

We note that, depending on the choice of matroids,~\eqref{lp} might not be efficiently computable. For example, the representation of $P_{\Fcal}$ might require an exponential (in $n$, $m$ and $\sum_{i \in [n]} |\Vcal_i|$) number of inequalities; in such cases, our approach does not give an end-to-end efficient procedure for finding a mechanism. However, if one is given feasible in expectation, BIC-IR and approximately optimal interim rules, all remaining steps of our framework can be efficiently computed.



As a second application, we combine~\Cref{thm: from t to mechanisms} with our tOCRSs for Knapsack and Multi-Choice Knapsack.

\begin{application}[\Cref{theorem:knapsack},~\Cref{theorem:knapsack unit demand}, and \Cref{thm: from t to mechanisms}] \label{application: Knapsack}
    Consider the problem of auctioning off $m$ items to $n$ agents with additive preferences, where each item $j \in [m]$ has some weight $k_j$ and the total weight of items sold cannot exceed $K$. 
    There exists an efficiently computable, sequential, BIC-IR mechanism that guarantees a $10$ approximation to the optimal revenue. Additionally, if each agent $i$ can get at most one item, there exists an efficiently computable, sequential, BIC-IR mechanism that guarantees a $9$ approximation to the optimal revenue.
\end{application}

Recently, \cite{aggarwal2022simple} gave a simple and approximately optimal, with respect to welfare, mechanism for the Rich-Ads problem. \Cref{application: Knapsack} readily gives a sequential, approximately optimal, with respect to revenue, mechanism for the same problem.

Finally, by creating a meta-item for each possible subset of items (where the weight of the meta-item is simply the sum of weights from the corresponding subset), our approach gives an approximately optimal, sequential, BIC-IR mechanism for \emph{arbitrary} valuation functions (subject to a knapsack constraint). For a logarithmic (in $n$) number of items, this approach gives a computationally efficient procedure as well (but, of course, this is not true in general).

\begin{application}[\Cref{theorem:knapsack unit demand} and \Cref{thm: from t to mechanisms}]\label{application: arbitrary}
    Consider the problem of auctioning off $m$ items to $n$ agents with arbitrary valuation functions, where each item $j \in [m]$ has some weight $k_j$ and the total weight of items sold cannot exceed $K$. There exists a sequential, BIC-IR mechanism that guarantees a $9$ approximation to the optimal revenue.
\end{application}

\section{Extensions to procurement auctions}\label{sec: procurement}

In this section, we show how to extend our framework for the case of procurement auctions.
We only show how our framework works for sequential procurement auctions, where we construct an auction using an OCRS (and interim allocations/payments); our results can be extended to give non-sequential auctions using a CRS, similarly to our results in~\Cref{sec: mech from tCRS}.

Budget feasible procurement auctions were introduced by the seminal work of Singer~\cite{singer2010}. Following this, there has been a line of work studying deterministic and randomized budget feasible mechanisms that obtain approximately optimal welfare, where a major focus has been on single dimensional and prior-free settings~\cite{ChenGL11,GravinJLZ20,AnariGN18,klumper2022budget}.

We start by defining the procurement problem we study.

\paragraph{Procurement Preliminaries}

There is a single buyer and a set of $n$ sellers. Each seller has a total of $m$ services they can provide. The buyer has a publicly known value $v_{i,j}$ for getting the $j$-th service that seller $i$ offers. An (integral) allocation $x \in \{ 0 , 1 \}^{nm}$ indicates which services the buyer received: $x_{i,j} \in \{ 0 ,1 \}$ is the indicator for whether the buyer received the $j$-th service of seller $i$. The buyer's value for an allocation $x$ is $\sum_{i \in [n], j\in [m]} x_{i,j} v_{i,j}$. The buyer will pay the sellers for their services; the buyer's objective is to maximize her value without paying more than a (publicly known) budget $B$. Each seller $i \in [n]$ has some cost for providing service $j \in [m]$ depicted as $c_{i,j}$. We will assume that $c_i$ is drawn from a known distribution $\Ccal_i$; we allow for correlation between the cost for different services for a fixed seller, but require independence between sellers.

A procurement auction elicits reported costs $(c_1, \dots, c_n)$, and determines which services are procured from which seller, as well as the payments to the sellers. Our goal is to design BIC-IR, budget-feasible procurement auctions that maximize the buyer's expected value. The definition of BIC-IR, approximate optimality, sequentiality, and interim allocations/payments are similar to the corresponding definitions from~\Cref{sec:prelims}, and are deferred to~\Cref{app: missing procurement prelims}.

For ease of exposition, we will assume that there are no constraints on the services we can acquire, other than the buyer's budget constraint. If additional constraints exist, our framework can be extended using the ideas analyzed in the previous sections.

\subsection{Procurement Framework}

Our procurement framework uses OCRSs for Stochastic Knapsack. A $c$-selectable OCRS for Stochastic Knapsack $\mu_K$ is parameterized by a knapsack size $K$ and distributions from which elements' weight are drawn. The OCRS is given, in an online manner, elements and their weight (which is drawn from the aforementioned distributions), one at a time, and it needs to decide, immediately and irrevocably, whether to include an element in the knapsack, in a way that every element is selected with (ex-ante) probability at least $c$; see~\Cref{sec: stochastic knapsack ocrs} for more details.

The input to our framework is  (i)  a feasible in expectation, BIC-IR interim allocation, payment rule pair $(\pi,q)$ that is an $\alpha \geq 1$ approximation, (ii) a $c$-selectable OCRS for Stochastic Knapsack, and (iii) a parameter $\epsilon \geq 0$.  Our framework,~\Cref{algo: OCRS to proc}, outputs a BIC-IR procurement auction that is a $\alpha /  (c - \epsilon)$ approximately optimal. 

\Cref{algo: OCRS to proc} works as follows. We approach each seller $i$ sequentially. Seller $i$ reports $r^*_i \in supp(\Ccal_i)$, and we query the OCRS on input $q_i(r^*_i)$. If the OCRS selects an element with weight $q_i(r^*_i)$, we pay seller $i$ an amount equal to $q_i(r^*_i)$, with a certain probability (this step ensures that the expected payment to seller $i$ is exactly $(c-\epsilon) q_i(r^*_i)$). Finally, for each service $j \in [m]$, we receive the service from seller $i$ with probability $(c-\epsilon) \pi_{i,j}(r^*_i)$.

\begin{algorithm}[!ht]
  \DontPrintSemicolon
  \KwInput{an interim allocation, payment rule pair $(\pi,q)$, a $c$-selectable OCRS $\mu_B$ for Stochastic Knapsack, a parameter $\epsilon \geq 0$.}
   \hrulealg
  \For{each seller $i \in [n]$}{
        Seller $i$ reports $r^*_i \in supp(\mathcal{C}_i)$.\\
        $Z_i \gets \mu_B(q_i(r^*_i))$.\\
        \If{$Z_{i} = 1$}{
            Pay seller $i$,  $q_i(r^*_i)$ with probability $(c - \epsilon)/p^*_{i}(r^*_i)$, where $p^*_{i}(r^*_i)$ be the probability that the OCRS selects an element with weight $q_i(r^*_i)$.
        }
        \For{each service $j \in [m]$}{
            Receive service $j$ from seller $i$ with probability $(c-\epsilon)\pi_{i,j}(r^*_i)$
        }  
    }
\caption{Our sequential procurement auction when given an OCRS}\label{algo: OCRS to proc}
\end{algorithm}

\begin{theorem}\label{thm: from OCRS to proc}
    Given (i) feasible in expectation, BIC-IR interim allocation and payment rules $(\pi,q)$ that are an $\alpha \geq 1$ approximation, (ii) a $c$-selectable OCRS for Stochastic Knapsack, and (iii) a parameter $\epsilon \geq 0$,~\Cref{algo: OCRS to proc} gives a BIC-IR sequential procurement auction that is $\alpha /(c - \epsilon)$-approximately optimal. If we have query access to the OCRS, our mechanism can be implemented using a $O(poly( \sum_i |supp(\Ccal_i)|, m, \frac{1}{\epsilon}))$ number of queries in expectation.
\end{theorem}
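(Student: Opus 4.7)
The plan is to closely mirror the proof of~\Cref{thm: from t to mechanisms}, adapting the arguments to the single-dimensional seller-type setting. The mechanism in~\Cref{algo: OCRS to proc} is essentially a $(c-\epsilon)$-scaled execution of the interim rule $(\pi,q)$, where the scaling is implemented via the OCRS for Stochastic Knapsack together with a post-hoc coin flip that balances the selection probabilities across sellers.

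First I would verify budget feasibility ex-post. By definition, $\mu_B$ is an OCRS for Stochastic Knapsack with capacity $B$, so the set $S := \{i : Z_i = 1\}$ of sellers the OCRS ``selects'' satisfies $\sum_{i \in S} q_i(r_i^*) \le B$. Since~\Cref{algo: OCRS to proc} only ever pays seller $i$ the amount $q_i(r_i^*)$ when $Z_i = 1$ (and zero otherwise), the realized total payment is at most $B$. I would also note that the coin of probability $(c-\epsilon)/p_i^*(r_i^*)$ is well-defined: by $c$-selectability of $\mu_B$, $p_i^*(r_i^*) \ge c \ge c-\epsilon$, so this is in $[0,1]$.

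For BIC-IR and the approximation factor, fix a seller $i$ with true cost vector $c_i$ who considers reporting $r_i$. Conditional on the report $r_i$, the probability that seller $i$ is paid is $p_i^*(r_i) \cdot (c-\epsilon)/p_i^*(r_i) = c-\epsilon$, so her expected payment is exactly $(c-\epsilon)\, q_i(r_i)$; service $j$ is procured from her with probability $(c-\epsilon)\pi_{i,j}(r_i)$. Her expected utility thus equals $(c-\epsilon)\bigl[q_i(r_i) - \sum_{j} c_{i,j}\pi_{i,j}(r_i)\bigr]$, which is $(c-\epsilon)$ times her interim utility under $(\pi,q)$. Since $(\pi,q)$ is BIC-IR, so is the resulting mechanism, and truthful reporting remains optimal. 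The expected value obtained by the buyer is $(c-\epsilon)\sum_i \mathbb{E}_{c_i \sim \Ccal_i}\bigl[\sum_j \pi_{i,j}(c_i)\, v_{i,j}\bigr]$, namely $(c-\epsilon)$ times the interim value of $(\pi,q)$, yielding an $\alpha/(c-\epsilon)$ approximation to the optimum.

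The main obstacle, as in~\Cref{thm: from t to mechanisms}, is the non-constructive coin flip of probability $(c-\epsilon)/p_i^*(r_i^*)$, since $p_i^*(r_i^*)$ is not available in closed form when only given query access to $\mu_B$. To handle this I would apply the Bernoulli factory for division of~\cite{morina2021bernoulli} (\Cref{lem: bernoulli for division}). A $p_i^*(r_i^*)$-coin can be implemented by a fresh simulation of $\mu_B$: draw $\hat c_{i'} \sim \Ccal_{i'}$ independently for each $i' < i$, feed the OCRS the elements of weights $q_{i'}(\hat c_{i'})$ in order, and then query it on an element of weight $q_i(r_i^*)$, returning the OCRS's decision on this last element. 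Since $p_i^*(r_i^*) - (c-\epsilon) \ge \epsilon$, each division factory requires $O(1/\epsilon)$ OCRS simulations in expectation, and each simulation touches at most $n$ sellers; summing over the $n$ outer iterations and the per-seller sampling of $r_i^*$ (which takes values in $supp(\Ccal_i)$) yields the claimed $O(\mathrm{poly}(\sum_i |supp(\Ccal_i)|, m, 1/\epsilon))$ bound. I expect this Bernoulli-factory step to be the only real subtlety; all other steps are faithful single-dimensional translations of the~\Cref{thm: from t to mechanisms} argument.
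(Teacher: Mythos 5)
Your proof is correct and follows essentially the same approach as the paper: budget feasibility comes directly from the OCRS knapsack guarantee, BIC-IR and the $\alpha/(c-\epsilon)$ approximation both follow from the uniform $(c-\epsilon)$-scaling of interim payments and allocations, and the non-constructive coin flip is handled by the Bernoulli division factory of~\cite{morina2021bernoulli}. The extra detail you give on simulating the $p_i^*(r_i^*)$-coin (sampling $\hat c_{i'}$ for $i'<i$ and replaying the OCRS) matches what the paper defers to~\Cref{subsec: implementation proc}.
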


\begin{proof}
First, we argue that~\Cref{algo: OCRS to proc} is budget feasible with probability $1$. By definition, and assuming truthful reports, the interim payment $q$ defines a feasible distribution of ``weights'' for each seller. The OCRS always selects a set of elements whose weight is at most the knapsack size (in our case $B$), and our total payments are at most the total weight that the OCRS packs in the knapsack; therefore, our total payments are at most $B$.

Second, we argue that~\Cref{algo: OCRS to proc} is BIC-IR. From the perspective of seller $i$, a report $r_i \in \Vcal_i$ costs  $\frac{c-\epsilon}{p^*_{i}(r_i)} p^*_{i}(r_i) q_i(r_i) = (c-\epsilon)q_i(r_i)$. The expected cost of services is $\sum_{j \in [m]} c_{i,j} (c-\epsilon) \pi_{i,j}(r_i)$. Therefore, her expected utility is $(c - \epsilon)\left(q_i(r_i) -\sum_{j \in [m]} c_{i,j} \pi_{i,j}(r_i)\right)$; since $(\pi, q)$ is BIC, so is~\Cref{algo: OCRS to proc}. Near-identical arguments imply the BIR guarantee. 

The expected value of the buyer is $\sum_{i \in [n]} \sum_{c_i \in \Ccal_i} \Pr[c_i] \sum_{j \in [m]} v_{i,j} \, (c-\epsilon) \pi_{i,j}(c_i)$, which is a $\alpha / (c-\epsilon)$ approximation, since $(\pi,q)$ is an $\alpha$ approximation.

If we are given only black-box access to an OCRS for Stochastic Knapsack, it is not immediately straightforward how to flip a coin with probability $(c - \epsilon)/p^*_{i}(r^*_i)$ (efficiently or otherwise), as needed in ~\Cref{algo: OCRS to proc}. Using a Bernoulli factory for division (such as the result of~\cite{morina2021bernoulli} discussed in~\Cref{sec:prelims}), we can implement this step with $O(\frac{1}{\epsilon})$ calls in expectation; we discuss implementation details in~\Cref{subsec: implementation proc}.
\end{proof}

\cite{jiang2022tight} give a $\frac{1}{3+e^{-2}}$-selectable OCRS for Stochastic Knapsack. Combining with~\Cref{thm: from OCRS to proc} we readily get the following application.

\begin{application}[\Cref{thm: from OCRS to proc} and~\cite{jiang2022tight}]\label{application: procurement}
    Consider the problem of purchasing $m$ services from $n$ strategic sellers, subject to a budget constraint. There exists a sequential, budget-feasible, BIC-IR procurement auction that guarantees a $3+e^{-2}$ ($\approx 3.13$) approximation to the expected value of the optimal BIC-IR auction.
\end{application}

In~\Cref{sec: stochastic knapsack ocrs} we give a new OCRS for Stochastic Knapsack that is $\max\{\frac{1-k^*}{2-k^*}, \frac{1}{6} \}$-selectable, where $k^* = \frac{1}{K} \max_{i \in [n] , k_i \in supp(\Kcal_i)} \, k_i$, and $\Kcal_i$ is the distribution of weights for the $i$-th element. Our OCRS outperforms the OCRS of~\cite{jiang2022tight} when $k^*$ is small (specifically, $k^* \leq 1/3$). Furthermore, our OCRS induces a greedy and monotone OCRS for the non-stochastic knapsack problem, which is not true for the OCRS of~\cite{jiang2022tight}. Note that, our OCRS also implies that better approximation guarantees are possible for sequential procurement auctions if the payment to a seller is never more than a third of the total budget.

\subsection{Implementation Considerations} \label{subsec: implementation proc}

Here, we highlight some implementation details. First, we give a simple LP that computes optimal ($\alpha = 1$), feasible in expectation, BIC-IR interim allocation and payment rule $(\pi,q)$. Second, we flesh out implementation details regarding flipping a $(c - \epsilon)/p^*_{i}(c_i)$-coin, when given only black-box access to an OCRS.

\paragraph{Finding feasible in expectation, optimal interim rules}
Consider the following linear program,~\eqref{lp:proc}, which computes an interim relaxation of the revenue optimal BIC-IR mechanism.

\begin{equation*}
\label{lp:proc}
\begin{array}{ll@{}ll}
\text{maximize} & \multicolumn{2}{l}{ \displaystyle\sum\limits_{i \in [n]} \displaystyle\sum\limits_{c_i \in \Ccal_i} \Pr[c_i] \sum_{j \in [m]} \pi_{i,j}(c_i) v_{i,j}}\\ \\

\text{s.t.} &  q_i(c_i) - \sum_{j \in [m]} c_{i,j} \pi_{i,j}(c_{i}) & \ge q_i(c'_i) - \sum_{j \in [m]} c_{i,j} \pi_{i,j}(c'_{i}) & \forall i \in [n], c_i, c'_i \in supp(\Ccal_i)\\  
\\
&  q_i(c_i) - \sum_{j \in [m]} c_{i,j} \pi_{i,j}(c_{i}) & \ge 0 & \forall i \in [n], c_i \in supp(\Ccal_i)  \\
\\
& \sum_{i \in [n]} \sum_{c_i \in \Ccal_i} \Pr[c_i] q_i(c_i) & \le B \\
\\
& q_i(c_i) & \le B & \forall i \in [n], c_i \in supp(\Ccal_i)
\end{array}
\hfill \text{(LP2)}
\end{equation*}

This LP has $O(n \sum_{i \in [n]} |supp(\Ccal_i)|)$ variables, and $O(n \sum_{i \in [n]} |supp(\Ccal_i)|^2)$ constraints, and is therefore efficiently computable by standard LP solvers.

\paragraph{Flipping a coin.}
We again use a Bernoulli factory for division to produce a $(c - \epsilon)/p^*_{i}(c_i)$-coin. $c-\epsilon$ is known. And, identically to our approach in~\Cref{subsec: implementation}, we can flip a $p^*_{i}(c_i)$-coin by simulating the entire procedure, conditioning on $c_i$ being the report of seller $i$.

\subsection{A new OCRS for Stochastic Knapsack}\label{sec: stochastic knapsack ocrs}

In the Stochastic Knapsack problem, there is a ground set of elements $N=\{e_i\}_{i \in [n]}$ and a knapsack size $K$. Each element arrives sequentially and reveals a random weight $k_{i} \in [0,K]$ drawn from a known prior distribution $\Kcal_i$ (where a draw of $k_i = 0$ is analogous to element $e_i$ being inactive/not arriving). The input distribution satisfies $\sum_{i \in [n]} \sum_{k_i \in supp(\Kcal_i)}\Pr[\Kcal_i = k_i] \cdot k_i \le K$. Once an element arrives and reveals its weight we need to immediately and irrevocably decide whether this element is included in the knapsack. A $c$-selectable OCRS for this problem is a procedure that selects elements (online), such that the knapsack constraint is never violated (i.e., $\sum_{i \in [n]} k_i \le K$ in all outcomes), and every element is selected with probability at least $c$.

\begin{theorem} \label{theorem:stochastic knapsack}
    There exists a $\max\{\frac{1-k^*}{2-k^*}, \frac{1}{6} \}$-selectable OCRS for Stochastic Knapsack, where $k^* = \frac{1}{K} \max_{i \in [n] , k_i \in supp(\Kcal_i)} \, k_i$.
\end{theorem}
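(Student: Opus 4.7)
The plan is to produce two OCRSs, a \emph{uniform} one and a \emph{split} one, and take whichever bound is better for the given $k^*$. The uniform scheme yields selectability $\frac{1-k^*}{2-k^*}$, which dominates when $k^* \le 4/5$, while the split scheme yields $1/6$, which dominates when $k^* > 4/5$. Both schemes accept each element only if doing so keeps the total weight below $K$, so feasibility holds always.

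\emph{Uniform scheme.} Put $c = \frac{1-k^*}{2-k^*}$ and let $W_i$ denote the total weight of previously accepted elements when $e_i$ arrives and reveals weight $k_i$. Accept $e_i$ with probability $c/p_i^*(k_i)$, provided $W_i + k_i \le K$, where $p_i^*(k_i) := \Pr[W_i + k_i \le K \mid k_i]$ is computed offline from the priors. I would prove by induction on $i$ that (i) the conditional acceptance probability is well-defined, i.e., $p_i^*(k_i) \ge c$ pointwise in $k_i$, and (ii) the conditional selection probability of $e_i$ given $k_i$ equals exactly $c$. In the induction step, since each $k_j$ for $j < i$ and each coin toss used before stage $i$ is independent of $k_i$, we have $E[W_i \mid k_i] = E[W_i] \le c \sum_{j<i} E[k_j] \le cK$; combined with $k_i \le k^* K$, Markov's inequality yields $p_i^*(k_i) \ge 1 - \frac{c}{1-k^*}$. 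The value $c = \frac{1-k^*}{2-k^*}$ is precisely the fixed point of $c \mapsto 1 - \frac{c}{1-k^*}$, so the invariant closes.

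\emph{Split scheme.} Flip a fair coin at the outset, running the \emph{heavy branch} on heads and the \emph{light branch} on tails. The heavy branch considers only elements whose revealed weight exceeds $K/2$ and accepts at most one of them (automatically feasible): upon such an arrival, accept with probability $(1/3)/\Pr[H_i = 0]$, where $H_i$ indicates that some earlier heavy element has already been accepted. The light branch considers only elements with $k_i \le K/2$, accepting with probability $(1/3)/\Pr[W_i < K/2]$ when the running total $W_i$ stays strictly below $K/2$; the total thus never exceeds $K$. In each branch the conditional selection rate works out to exactly $1/3$, and validity of the acceptance probabilities follows from Markov's inequality applied to $E[\#\text{heavy accepts before } i] \le (1/3) \sum_j \Pr[k_j > K/2] \le 2/3$ (giving $\Pr[H_i = 0] \ge 1/3$) and $E[W_i] \le K/3$ (giving $\Pr[W_i < K/2] \ge 1/3$), using in the first bound $\sum_j \Pr[k_j > K/2] \cdot (K/2) \le \sum_j E[k_j] \le K$. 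Any element is therefore selected with ex-ante probability at least $(1/2)(1/3) = 1/6$.

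\textbf{Main obstacle.} The chief subtlety is making the uniform scheme's induction close in the pointwise sense: I need $p_i^*(k_i) \ge c$ for \emph{every} realization of $k_i \in [0, k^*K]$, not just in expectation over $k_i$. This is exactly what the identity $E[W_i \mid k_i] = E[W_i]$ buys me (it relies on $k_i$ being independent of all prior randomness, including the scheme's own coin flips), combined with the uniform lower bound $K - k_i \ge (1-k^*)K$. The remaining work is bookkeeping, plus checking that the split scheme's two acceptance probabilities are always at most one (which reduces to the same Markov bounds used in Theorem~\ref{theorem:knapsack}).
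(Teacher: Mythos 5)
Your proof is correct and follows essentially the same approach as the paper: a $\gamma$-selectable scheme (with $\gamma=\frac{1-k^*}{2-k^*}$) that accepts with probability $\gamma/\Pr[\text{feasible}\mid k_i]$, a $1/6$-selectable heavy/light split, and a case distinction between them, with the key probability bound obtained via $E[W_i]\le\gamma K$ and Markov's inequality. The only cosmetic difference is in the light branch of the $1/6$ scheme, where you gate on $W_i<K/2$ while the paper simply reuses the $\gamma$-scheme (observing that restricting to light elements forces $k^*\le 1/2$, hence $\gamma\ge 1/3$); both give identical guarantees.
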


\begin{proof}
    We present two OCRSs: a $\gamma$-selectable OCRS, where $\gamma  = \frac{1-k^*}{2-k^*}$, and a $\frac{1}{6}$-selectable OCRS.
    Our overall OCRS computes $k^* = \frac{1}{K} \max_{i \in [n] , k_i \in supp(\Kcal_i)} \, k_i$. If $\gamma \geq \frac{1}{6}$, it executes the following $\gamma$-selectable OCRS; otherwise, it executes a $\frac{1}{6}$-selectable OCRS.

    First, we give the $\gamma$-selectable OCRS.
    Initialize $I = \emptyset$, and let $C_i(k_i)$ be the event that $\sum_{i' \in I}k_{i'} \le K - k_i$ when element $e_i$ arrives, and given that its weight is $k_i$. When element $e_i \in N$ with weight $k_i$ arrives, if $\sum_{i' \in I}k_{i'} \le K - k_i$, we include $e_i$ in $I$ with probability $\frac{\gamma}{\Pr[C_i(k_i)]}$, where $\gamma  = \frac{1-k^*}{2-k^*}$. 
    The probability with which an element $e_i$ is selected is then:
 \[\sum_{k_i \in supp(\Kcal_i)} \Pr[\Kcal_i = k_i] \Pr\left[C_i(k_i) \right] \frac{\gamma}{ \Pr[C_i(k_i)]} = \gamma \sum_{k_i \in supp(\Kcal_i)} \Pr[\Kcal_i = k_i] = \gamma.\]
    It remains to prove that $\frac{\gamma}{ \Pr[C_i(k_i)]}$ is a valid coin (i.e., $\Pr[C_i(k_i)] \geq \gamma$). Let $W_{i}$ be the random variable that represents the total weight of elements in $I$ (i.e. $\sum_{i' \in I} k_{i'}$) when element $i$ arrives.
\begin{align*}
    \Ex{W_{i}} &= \sum_{i' < i} \sum_{k_{i'}  \in supp(\Kcal_{i'})} \Pr\left[C_{i'}(k_i) \right] \frac{\gamma}{ \Pr[C_{i'}(k_i)]} \Pr[\Kcal_{i'} = k_{i'}] k_{i'} \le \gamma K.
\end{align*}
Therefore, we have
\begin{align*}
    \Pr[C_{i}(k_i)] &= \Pr[E_{i} \le K - k_i] \\
    &= 1-\Pr[W_{i} > K - k_i] \\
    &\ge 1 - \frac{\gamma K}{K-k_i} \tag{Markov's Inequality}\\
    &\ge 1-  \frac{\gamma}{1- k^*} \\
    &= \gamma.
\end{align*}
This concludes the proof for the $\gamma$-selectable OCRS.

Next, we give a $\frac{1}{6}$-selectable OCRS.
With probability $1/2$ we run a ``heavy scheme,'' that only considers elements $e_i$ such that $k_i > \frac{K}{2}$; otherwise, we run a ``light scheme,'' that only considers elements $e_i$ such that $k_i \leq \frac{K}{2}$.

    
    Suppose we run the heavy scheme. Initialize $I = \emptyset$, and let $A_i$ be the event that $I = \emptyset$ when element $e_i$ arrives. For each element $e_i$ such that $k_i > \frac{K}{2}$, if $I = \emptyset$, we select $e_i$ with probability $\frac{1}{3 \Pr[A_i]}$. Assuming that $\frac{1}{3 \Pr[A_i]}$ is a valid coin (i.e., $\Pr[A_i] \geq 1/3$), the probability with which each element is selected, given that it is heavy and that we run the heavy scheme, is $\Pr[A_i] \frac{1}{3 \Pr[A_i]} = 1/3$. To prove that $\frac{1}{3 \Pr[A_i]}$ is a valid coin  we have:
    \begin{align*}
        \Pr[A_{i+1}] &= \Pr[A_{i}]\left(1 - \frac{1}{3 \Pr[A_i]} \sum_{\substack{k_i > K/2}} \Pr[\Kcal_i = k_i] \right) \\
        & = \Pr[A_{i}] - \frac{1}{3}\sum_{k_i > K/2} \Pr[\Kcal_i = k_i] \\
        &= 1- \frac{1}{3}\sum_{i' \leq i} \sum_{k_i > K/2} \Pr[\Kcal_i = k_i] \\
        &\ge 1-2/3 = 1/3. 
    \end{align*}

Now, suppose we run the light scheme. Notice that in this regime, where we ignore elements whose weight is larger than $K/2$, the previous $\gamma$-selectable OCRS is $1/3$-selectable (since $k^* = \frac{1}{K} \max_{i \in [n] , k_i \in supp(\Kcal_i)} \, k_i$). Since each scheme (heavy and light) is chosen with probability $1/2$, this OCRS is $1/6$-selectable.

This concludes the proof for the $\frac{1}{6}$-selectable OCRS.
\end{proof}

\begin{proposition}\label{prop: stochastic knapsack implementations}
    We can implement a $\left(c\left(\frac{1-\delta}{1+2\, \epsilon/c}\right)\right)$-selectable OCRS for the Stochastic Knapsack setting in time $poly(1/\epsilon^2, \log (1/\delta), n)$, where $c = \max\{\frac{1-k^*}{2-k^*}, 1/6\}$, and $k^* = \frac{1}{K} \max_{i \in [n] , k_i \in supp(\Kcal_i)} \, k_i$.
\end{proposition}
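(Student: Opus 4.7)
The plan is to mirror the approach used in the proofs of Proposition~\ref{proposition:efficient implimentation knapasack} and Proposition~\ref{prop: efficient knapsack unit demand}: the only non-constructive step in the OCRS of Theorem~\ref{theorem:stochastic knapsack} is the coin with probability $\frac{\gamma}{\Pr[C_i(k_i)]}$ (in the $\gamma$-selectable scheme) and the analogous coins $\frac{1}{3\Pr[A_i]}$ and $\frac{\gamma}{\Pr[C_i(k_i)]}$ in the two sub-schemes of the $1/6$-selectable construction. The strategy is to replace these exact probabilities with Monte Carlo estimates, then argue that the Markov-style bound in the original analysis is still satisfied, and that each element's overall selection probability degrades only by the target factor.

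First I would implement the $\gamma$-selectable OCRS. When element $e_i$ arrives with realized weight $k_i$, I would run $T = \frac{1}{2\epsilon^2}\log(2n/\delta)$ independent simulations of the algorithm on the prior distributions over $e_1,\dots,e_{i-1}$ (using the empirical coins already produced in the real run for the earlier estimates is also fine as long as we refresh randomness per-simulation), with $k_i$ fixed as the realized weight of element $i$, and record the indicator $X_t$ that $C_i(k_i)$ occurred in simulation $t$. I then select $e_i$ (when eligible, i.e.\ when it currently fits) with probability $\frac{c}{\frac{1}{T}\sum_t X_t + \epsilon}$. By Hoeffding's inequality,
\begin{equation*}
\Pr\!\left[\,\left|\tfrac{1}{T}\sum_t X_t - \Pr[C_i(k_i)]\right| > \epsilon\right] \le 2\exp(-2\epsilon^2 T) \le \delta/n,
\end{equation*}
and a union bound over the $n$ elements gives a failure probability of at most $\delta$. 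On the complement event, the estimate lies in $[\Pr[C_i(k_i)],\Pr[C_i(k_i)]+2\epsilon]$, so the selection coin is at most $c/\Pr[C_i(k_i)]$; plugging this back into the expected-weight computation in the proof of Theorem~\ref{theorem:stochastic knapsack} gives $\mathbb{E}[W_i] \le cK$, hence $\Pr[C_i(k_i)] \ge 1 - c/(1-k^*) = c$ by Markov, so the coin is valid.

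Next I would bound the resulting per-element selection probability. Conditioned on the good event (no estimate deviates by more than $\epsilon$),
\begin{equation*}
\Pr[e_i \text{ selected}] \;\ge\; \Pr[C_i(k_i)] \cdot \frac{c}{\Pr[C_i(k_i)] + 2\epsilon} \;\ge\; \frac{c}{1 + 2\epsilon/c},
\end{equation*}
which, combined with the $1-\delta$ success probability, yields the claimed $c(1-\delta)/(1+2\epsilon/c)$ guarantee. For the $1/6$-selectable variant, I would apply the same Monte Carlo estimator to both the heavy-scheme coin $\tfrac{1}{3\Pr[A_i]}$ and the light-scheme coin $\tfrac{\gamma}{\Pr[C_i(k_i)]}$; each satisfies the identical analysis with $c=1/3$, so after averaging over the $1/2$-probability choice between the schemes we recover $c = 1/6$ with the same multiplicative loss. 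Finally, I would collect the running time: each per-element estimation uses $T$ simulations of $O(n)$ cost, so the total time is $O(n^2 \cdot \epsilon^{-2}\log(n/\delta)) = \mathrm{poly}(1/\epsilon^2, \log(1/\delta), n)$.

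The only real obstacle is a bookkeeping issue: the estimates used for element $i$ depend on the coins chosen for elements $1,\dots,i-1$, so a naive recursive Bernoulli factory would blow up exponentially. The standard fix, which I would use here (exactly as in Proposition~\ref{proposition:efficient implimentation knapasack}), is to carry out the estimation in a single forward pass where the coin for element $j<i$ used inside the simulations is the \emph{estimated} coin produced at step $j$ of the real execution; once those estimated coins are fixed, each subsequent simulation is just an $O(n)$ random walk, and Hoeffding applies to each step independently. Conditioning on the good event that all $n$ estimates are within $\epsilon$ of their true conditional probabilities then justifies both the Markov step ensuring the coin stays in $[0,1]$ and the final per-element lower bound, completing the proof.
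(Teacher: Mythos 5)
Your proposal is correct and follows essentially the same route as the paper's own proof: identify the single non-constructive coin $\gamma/\Pr[C_i(k_i)]$, replace it with a Monte Carlo estimate over $T=\tfrac{1}{2\epsilon^2}\log(2n/\delta)$ forward simulations, shift the estimate by $\epsilon$ so the resulting coin never exceeds $\gamma/\Pr[C_i(k_i)]$, re-verify the Markov step to ensure validity, union-bound over elements, and obtain the $c(1-\delta)/(1+2\epsilon/c)$ guarantee. The only minor divergence is that you propose estimating the heavy-scheme coin $1/(3\Pr[A_i])$ as well, whereas the paper observes that $\Pr[A_i]$ admits a closed-form expression from the recursion in Theorem~\ref{theorem:stochastic knapsack} and therefore needs no estimation — but your version still works and in fact yields a slightly tighter loss factor of $1/(1+6\epsilon)$ for the $1/6$-selectable branch, which only strengthens the claimed bound.
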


The proof of~\Cref{prop: stochastic knapsack implementations} simply combines the procedure outlined in the proof of \Cref{theorem:stochastic knapsack} and the analysis outlined in the proof of \Cref{proposition:efficient implimentation knapasack}. Details are deferred to~\Cref{sec: missing proofs from procurement}.

\section*{Acknowledgements}

Marios Mertzanidis and Alexandros Psomas are supported in part by an NSF CAREER award CCF-2144208, a Google AI for Social Good award, and research awards from Google and Supra.
M.\ Mertzanidis is also supported in part by NSF CCF 2209509 and 1814041. 
Divyarthi Mohan is supported in part by the European Research Council (ERC) under the European Union's Horizon 2020 research and innovation program (grant No. 866132). This work was partially conducted while D.\ Mohan was visiting the Simons Laufer Mathematical Sciences Institute (formerly MSRI) in Berkeley, California, during the Fall 2023 semester, which was funded by the National Science Foundation (grant No. DMS-1928930) and by the Alfred P. Sloan Foundation (grant G-2021-16778).

\bibliographystyle{alpha}
\bibliography{biblio}

\appendix

\section{Further Related Work}\label{sec: related work}

\paragraph{Approximations in Bayesian Mechanism Design.} There is a rich body of literature on approximately optimal mechanism design, as discussed in the introduction. The work most related to ours is \cite{alaei2014bayesian} which provides a framework for designing multi-agent mechanisms by a reduction to single agent mechanisms via an ex-ante relaxation and using an online rounding procedure or OCRS. This framework yielded approximately optimal non-sequential and sequential mechanisms for several fundamental settings including selling $k$ identical items to $n$ additive buyers, who may be subject to additional constraints such as budget constraints, although does not support any inter-buyer constraints (except for per item supply constraints). In particular, a sequential posted price mechanism can obtain a $1-O(1/\sqrt{k})$ to the optimal revenue. In the special case of unit-demand buyers with independent values for heterogeneous items (where each item can be allocated to at most one bidder), \cite{alaei2014bayesian} establishes a sequential posted price mechanism that obtains a $4$-approximation---recovering a prior result of \cite{ChawlaHMS10}. Sequential mechanisms have been of particular interest due to their simplicity in implementation. The \cite{alaei2014bayesian} framework (along with other influential techniques like \cite{cai2019duality} duality) has enabled the design of approximately optimal and simple sequential mechanisms in various follow-up works. Chawla et al.~\cite{chawla2016subadditive} consider a more general case where agents have constrained additive valuations (aka matroid rank valuations) and provide a simple sequential two-part tariff  mechanism that obtains a constant approximation. This constant was then improved to $70$ by Cai at al.~\cite{cai2017subadditive}, who also extend their results to the setting where agents have XOS (and subadditive) valuations over $m$ heterogeneous, independent items, and \cite{duttingKL2020} provide improved approximation ratio of $O(\log\log m)$ for subadditive valuations. Crucially, these works assume that the agents' valuations are independent over the item. For the setting of correlated items and constrained additive buyers (with matroid constraints), Alai~\cite{alaei2014bayesian} provides a sequential mechanism that is a $2$-approximation to the optimal revenue. However, this mechanism is not ``simple" (unlike posted price or two-tariff). Indeed when the item distributions are correlated no ``simple'' mechanism can obtain any non-trivial approximation~\cite{hart2019selling,briest2015pricing}. A recent line of work circumvents this impossibilities by studying approximation guarantees of simple mechanisms with respect to a weaker benchmark called buy-many mechanisms~\cite{briest2015pricing,chawla2023buy,chawla2024multi}.

\paragraph{CRSs, OCRSs, and their applications.}

Contention resolution schemes were defined by Chekuri et al.~\cite{chekuri2014submodular}, and extended by Feldman et al.~\cite{feldman2021online} to online settings. CRSs and OCRSs have since found numerous applications in Bayesian and stochastic online optimization, such as stochastic probing~\cite{fu2021random}, prophet inequalities~\cite{ezra2020online} (in fact, OCRSs are equivalent to ex-ante prophet inequalities~\cite{lee2018optimal}), pricing problems~\cite{pollner2022improved,chawla2024multi}, and network revenue management~\cite{ma2024online}. In this paper, we define a type of dependent CRSs and OCRSs, which we call two-level CRSs/OCRSs. Online dependent rounding schemes, under the name of ODRS, were introduced by~\cite{srinivasan2023online} in the context of rounding fractional bipartite $b$-matchings online; here we give a different name to our schemes to highlight the specific dependence we can handle.

\section{Additional Preliminaries}

\subsection{Bernoulli Factories}\label{app: bernoulli}

We outline a few Bernoulli factories and their construction to help the reader gain some intuition:

\begin{enumerate}
    \item Bernoulli Negation: Given a $p$-coin, implement $f(p) = 1-p$. This can be implemented with one sample from the $p$-coin:
    \begin{itemize}
        \item $P \sim Bern[p]$.
        \item If $P=0$ output $1$ (otherwise output $0$).
    \end{itemize}
    \item Bernoulli Down Scaling: Given a $p$-coin, implement $f(p) = \lambda \cdot p$ for a constant $\lambda \in [0,1]$. This can be implemented with one sample from the $p$-coin:
    \begin{itemize}
        \item Draw $\Lambda \sim Bern[\lambda]$ and $P \sim Bern[p]$.
        \item Output $\Lambda \cdot P$.
    \end{itemize}
    \item Bernoulli Averaging: Given one $p_0$-coin and one $p_1$-coin, implement $f(p_0, p_1) = \frac{p_0+p_1}{2}$. This can be implemented with one sample from the $p_0$-coin or one sample from the $p_1$-coin:
    \begin{itemize}
        \item Draw $Z \sim Bern[1/2]$, $P_0 \sim Bern[p_0]$, and $P_1 \sim Bern[p_1]$.
        \item Output $P_Z$.
    \end{itemize}
    \item Bernoulli Doubling: Given a $p$-coin, implement $f(p) = 2p$ for $p \in (0, 1/2 - \delta]$. This can be implemented with $O(1/\delta)$ samples in expectation from the $p$-coin. The algorithm was introduced by Nacu et. al.~\cite{nacu2005fast}.
    \item Bernoulli Addition: Given one $p_0$-coin and one $p_1$-coin, implement $f(p_0, p_1) =  p_0+p_1$ for $p_0+p_1 \in [0, 1-\delta]$. This can be implemented with $O(1/\delta)$ samples in expectation from the $p_0$-coin and $p_1$-coin:
    \begin{itemize}
        \item Use Bernoulli Averaging to create a $\frac{p_0+p_1}{2}$-coin.
        \item Use Bernoulli Doubling on the $\frac{p_0+p_1}{2}$-coin.
    \end{itemize}
    \item Bernoulli Subtraction \cite{morina2021bernoulli}: Given one $p_0$-coin and one $p_1$-coin, implement $f(p_0, p_1) = p_1-p_0$ for $p_1-p_0 \ge \delta$. This can be implemented with $O(1/\delta)$ samples in expectation from the $p_0$-coin and $p_1$-coin:
    \begin{itemize}
        \item Use Bernoulli Negation on the $p_1$-coin to create a $1-p_1$-coin.
        \item Use Bernoulli Addition on the $1-p_1$-coin and $p_0$-coin to create a $1-p_1+p_0$-coin.
        \item Use Bernoulli Negation on the $1-p_1+p_0$-coin.
    \end{itemize}
\end{enumerate}

\begin{proof}[Proof of~\Cref{lem: bernoulli for division}]
Consider the following distribution:

\begin{equation*}
    \Pr[Y_k = y_k] = \begin{cases}
        1- \frac{1}{2}p_1 &\text{if } y_k = -1\\
        \frac{1}{2}(p_1-p_0) &\text{if } y_k = 0\\
        \frac{1}{2}p_0 &\text{if } y_k = 1
    \end{cases}
\end{equation*}

Then it is not difficult to see that:
\[\sum_{k=0}^{\infty} \Pr[Y_k = 1] \prod_{i=0}^{k-1} \Pr[Y_i = -1] = \frac{1}{2}p_0 \sum_{k=0}^{\infty}\left(1 - \frac{1}{2}p_1 \right)^k = \frac{p_0}{p_1},\]

and thus~\Cref{algorithm:bernoulli} is a valid Bernoulli factory for $p_0/p_1$.

    Let $N_t$ be the random variable that represents the number of tosses at round $t$, and let $X_t$ be a random variable that is $1$ if the experiment lasts at least $t$ rounds and $0$ otherwise. Then $N = \sum_{t =1}^{\infty} N_t \, X_t$. Linearity of expectation implies that $\Ex{N} =  \sum_{t =1}^{\infty} \Ex{N_t \, X_t}$, however  $X_t$ and $N_t$ are independent and thus $\Ex{N} =  \sum_{t =1}^{\infty} \Ex{N_t} \, \Ex{X_t}$. From, Morina~\cite{morina2021bernoulli} (Proposition 2.27) we know that $\Ex{N_t} \le 11.06(1+\delta^{-1})$. On the other hand $\sum_{t =1}^{\infty} \Ex{X_t} = \sum_{t =1}^{\infty} \left(1- \frac{1}{2}p_i \right)^t = \frac{2}{p_1}$. Combining the above concludes the proof.
\end{proof}

\section{Missing Proofs from Section~\ref{sec:constructing tCRS/tOCRS}}\label{app: missing from tCRS const}

\begin{proof}[Proof of~\Cref{theorem:knapsack unit demand}]
Let $(\Dcal,bx)$ be the two-level stochastic process through which elements become active. Let $k_{i,j}$ be the weight of element $e_{i,j}$, and $K$ bet the total weight. Our overall approach is similar to~\Cref{theorem:knapsack}.

Let $S_h = \{e_{i,j}: i\in [n], j \in [m]| k_{i,j}>K/2\}$ be the set of ``heavy elements'' and $S_{\ell} = \{e_{i,j}: i\in [n], j \in [m]| k_{i,j}\le K/2\}$ be the set of ``light elements.'' 
Our tOCRS is randomized: with probability $\frac{1+4b}{2+7b}$ we run a scheme that considers only heavy elements, the ``heavy scheme,'' and with probability $\frac{1+3b}{2+7b}$ we run a scheme that considers only light elements, the ``light scheme.''
In both cases, we use $I$ to indicate the set of elements we output. Without loss of generality (from the definition of a tOCRS) we assume that elements arrive in the order $e_{1,1}, e_{1,2}, \dots, e_{n,m}$.

For the heavy scheme, initialize $I= \emptyset$ and consider elements sequentially. Let $A_{i,j}(d_i)$ be the event that $I= \emptyset$ when element $e_{i,j}$ is considered when we run the heavy scheme and $d_i$ was sampled from the two-level stochastic process $(\Dcal,bx)$. If element $e_{i,j}$ is active, heavy, and $I= \emptyset$, then with probability $\frac{1}{(1+4b)\Pr[A_{i,j}(d_i)]}$ we set $I \leftarrow \{e_{i,j}\}$; otherwise we move on to the next element. The analysis in~\Cref{theorem:knapsack} proves that $\frac{1}{(1+4b)\Pr[A_{i,j}(d_i)]}$ is a valid probability. Notice that each heavy element is selected with probability exactly $\Pr[A_{i,j}(d_i)]\frac{1}{(1+4b)\Pr[A_{i,j}(d_i)]} = \frac{1}{(1+4b)}$ given that we run the heavy scheme.

Now consider the light case. We initialize $I= \emptyset$. Let $B_i$ be the event that $\sum_{\substack{i',j: e_{i',j} \in I, i'<i }} k_{i',j} < K/2$. Also let $C_{i,j}(d_i)$ be the event that, when $d_i$ was sampled from the two-level stochastic process $(\Dcal, bx)$, $\forall j' \in S_{\ell}, j' < j: e_{i,j'} \notin I$. We consider each light element $e_{i,j} \in S_{\ell}$ sequentially and, if it is active, and the weight of elements in $I$ is less than $K/2$, and no other element of $i$ has been selected, we set $I \leftarrow I \cup \{e_{i,j}\}$ with probability $\frac{1}{(1+3b)\Pr[B_{i} \cap C_{i,j}(d_i)]}$; otherwise we move on to the next element. If $\Pr[B_i \cap C_{i,j}(d_i)] \ge \frac{1}{(1+3b)}$ (i.e., if $\frac{1}{(1+3b)\Pr[B_{i} \cap C_{i,j}(d_i)]}$ is a valid probability), each light element is selected with probability exactly $\frac{1}{1+3b}$ when we run the light scheme. Towards proving that $\Pr[B_i \cap C_{i,j}(d_i)] \ge \frac{1}{(1+3b)}$, let $W_{i}$  elements in $I$ when we consider the first element of $i$. We have:

\begin{align*}
    \Ex{W_{i}} &= \sum_{i'<i} \sum_{d_{i'} \in \Vcal_{i'}} \sum_{j \in S_\ell}  \Pr[\Dcal_{i'} = d_{i'}] \Pr[B_{i'} \cap C_{i',j}(d_{i'})] \frac{b}{(3b+1)\Pr[B_{i'} \cap C_{i',j}(d_{i'})]} b x_{i',j}(d_{i'}) k_{i',j} \\
    &= \frac{b}{1+3b} \sum_{i'<i} \sum_{j \in S_\ell} w_{i',j} k_{i',j}\\
    &\le \frac{b}{1+3b}K. \tag{Knapsack feasibility of $(\Dcal,bx)$}
\end{align*}
Therefore:
\begin{align*}
    \Pr[B_{i}] &= \Pr[W_{i} < K/2] = 1-\Pr[W_{i} \ge K/2] \ge^{\text{(Markov's Inequality)}} 1 - \frac{\frac{b}{1+3b}K}{K/2} = \frac{1+b}{1+3b}.
\end{align*}
On the other hand,
\begin{align*}
    \Pr[C_{i,j+1}(d_{i})|B_i] &= \Pr[C_{i,j}(d_{i})|B_i]\left(1-\frac{1}{(1+3b)\Pr[B_{i} \cap C_{i,j}(d_{i})]}bx_{i,j}(d_i)\right)\\
    &= \Pr[C_{i,j}(d_{i})|B_i] - \frac{1}{(1+3b)\Pr[B_i]}bx_{i,j}(d_i)\\
    &= 1 - \frac{1}{(1+3b)\Pr[B_i]} \sum_{j'\leq j: j' \in S_\ell}bx_{i,j}(d_i) \\
    &\ge 1-\frac{b}{(1+3b)\Pr[B_i]}.\tag{multi-choice feasibility of $(\Dcal,bx)$}
\end{align*}
Combining the above we have that:
\begin{align*}
    \Pr[B_{i} \cap C_{i,j}(d_{i})] = \Pr[B_i] \Pr[C_{i,j+1}(d_{i})|B_i] \ge  \Pr[B_i] - \frac{b}{(1+3b)}= \frac{1}{1+3b}.
\end{align*}
We run the heavy scheme with probability $\frac{1+4b}{2+7b}$; thus, for an active element $e_{i,j} \in S_{h}$ we have:
\begin{align*}
    \Pr[e_{i,j} \in I] = \Pr[\text{``heavy scheme''}] \Pr[e_{i,j} \in I|\text{``heavy scheme''}] \ge \frac{1+4b}{2+7b}\frac{1}{1+4b} = \frac{1}{2+7b}.
\end{align*}
We can similarly show that active light elements are also selected with probability at least $ \frac{1}{2+7b}$. This concludes the proof of~\Cref{theorem:knapsack unit demand}. 
\end{proof}

\begin{proof}[Proof of~\Cref{prop: efficient knapsack unit demand}]
    Notice again that, similarly to~\Cref{proposition:efficient implimentation knapasack}, the only step we cannot directly implement from the procedure outlined in the proof of~\Cref{theorem:knapsack unit demand} is the toss of a $\Pr[C_{i,j+1}(d_i) \cap B_i]$-coin. Specifically, in the proof of~\Cref{theorem:knapsack unit demand} we do not calculate the following probability exactly:
    \[\Pr[C_{i,j+1}(d_i) \cap B_i] = \Pr[B_i] \Pr[C_{i,j+1}(d_i) |B_i] = \Pr[B_i]-\frac{b}{1+3b} \sum_{j'<j: j' \in S_l} x_{i,j}(d_i).\]
    
    Our procedure will again sequentially approximate these probabilities using multiple experiments, and bounding the error using Chernoff-Hoeffding bounds. 
    
    In order to decide whether to select some element $e_{i,j}$, we repeatedly simulate our algorithm until element $e_{i,j}$, for $T = \frac{1}{2\epsilon^2}\log \frac{2nm}{\delta}$ repetitions, where the choice of running the ``light scheme'' and $d_i$ are fixed. Let $X_t$ be the random variable that indicates if $B_{i}$ occurred at simulation $t \in [T]$; from Chernoff–Hoeffding bounds~\cite{hoeffding1994probability} we have that $\Pr\left[ \left|\frac{1}{T}\sum_{t \in [T]} X_t - \Pr[B_{i,j}(d_i)] \right|>\epsilon \right] \le \frac{\delta}{nm}$. Instead of selecting element $e_{i,j}$ (when it is active) with probability $\frac{1}{(1+3b)\Pr[B_{i} \cap C_{i,j}(d_i)]}$, we select it with probability $\frac{1}{(1+3b)\left(\ell+ \epsilon\right)}$-coin where $\ell = \frac{1}{T}\sum_{t \in [T]} X_t - \frac{b}{1+3b} \sum_{j'<j: j' \in S_l} x_{i,j}(d_i)$.

    Assuming that $\left|\frac{1}{T}\sum_{t \in [T]} X_t - \Pr[B_{i}] \right| \le \epsilon$ then $\left(\ell+ \epsilon \right) \in [\Pr[C_{i,j+1}(d_i) \cap B_i], \Pr[C_{i,j+1}(d_i) \cap B_i]+2\epsilon]$. Thus $\frac{\Pr[C_{i,j+1}(d_i) \cap B_i]}{(1+3b)\left(\ell+ \epsilon\right)} \le \frac{1}{1+3b}$. Thus $\Ex{W_i} \le \frac{b}{1+3b}K$ which gives us $\Pr[B_{i}]\ge \frac{1+b}{1+3b}$ and thus $\Pr[C_{i,j+1}(d_i) \cap B_i] \ge \frac{1}{1+3b} \ge 1/4$ using the same arguments presented in the proof of \Cref{theorem:knapsack unit demand}. 
     
    Thus, 
    \[\frac{\Pr[C_{i,j+1}(d_i) \cap B_i]}{(1+3b)\left(\ell+ \epsilon\right)} \ge \frac{1}{1+4b} \left(\frac{\Pr[C_{i,j+1}(d_i) \cap B_i]}{\Pr[C_{i,j+1}(d_i) \cap B_i]+2\epsilon} \right)  \ge \frac{1}{1+4b} \left(\frac{1}{1+8 \epsilon} \right).\]
    Union bounding we have that
    \[\Pr\left[ \exists (i,j) \in [n]\times [m]: \left|\frac{1}{T}\sum_{t \in [T]} X_t - \Pr[B_{i}] \right|>\epsilon \right] \le \delta.\]
    Thus with probability at least $1-\delta$ when we run the light scheme, each active element will be selected with probability at least $\frac{1}{1+4b} \left(\frac{1}{1+8 \epsilon} \right)$.
\end{proof}

\section{Missing from Section~\ref{sec: procurement}}

\subsection{Missing Preliminaries for Procurement Auctions}\label{app: missing procurement prelims}

A procurement auction elicits reported costs $(c_1, \dots, c_n)$, and determines which services are procured from which seller, as well as the payments to the sellers. A seller's objective is to maximize her expected utility, which is the total payment to her, minus the total cost she has to pay. A procurement auction is \emph{Bayesian Incentive Compatible (BIC)} if every seller $i \in [n]$ maximizes her expected utility by reporting her true costs $c_i$, assuming other sellers do so as well, where this expectation is over the randomness of other sellers' valuations, as well as the randomness of the mechanism. A mechanism is \emph{Bayesian Individually Rational} (BIR) if every seller $i \in [n]$ has non-negative expected utility when reporting her true cost (assuming other sellers do so as well). The (expected) value of a BIC procurement auction is the expected value the buyer makes when sellers draw their costs from $\Ccal$ (and report their true costs to the auction). We say that a procurement auction is BIC-IR if it is both BIC and BIR. A procurement auction is \emph{sequential} if it sequentially approaches each seller $i$, elicits a report, determines payments to seller $i$, and which services to procure from $i$, before proceeding to the next bidder. The \emph{optimal procurement auction} for a given distribution $\Ccal$, maximizes expected value over all BIC-IR procurement auctions. A procurement auction guarantees an $\alpha \geq 1$ approximation to the optimal value if its expected value is at least the expected value of the optimal procurement auction times $\frac{1}{\alpha}$.

The \emph{interim allocation} of a procurement auction $\M$, $\pi^{\M}$, indicates, for each seller $i$ and service $j$ the probability $\pi^{\M}_{i,j}(r_i)$ that seller $i$ receives service $j$ when she reports cost $r_i$ (over the randomness in $\M$ and the randomness in other sellers' reported costs $c_{-i}$, drawn from $\Ccal_{-i}$). The \emph{interim payment} of the buyer to seller $i$, $q^{\M}_{i}(r_i)$, is the expected payment she gets when she reports cost $r_i$ (again, over the randomness in $\M$ and the randomness in other sellers' reported costs).
The expected utility of seller $i$ with cost $c_i$ when reporting $r_i$ to a procurement auction $\M$, is $- \sum_{j \in [m]} c_{i,j} \pi^{\M}_{i,j}(r_i) + q^{\M}_{i}(r_i)$. An interim allocation rule $\pi$ is feasible in expectation if (i) $\forall i\in [n], c_i \in supp(\Ccal_i)$, $\pi_i(c_i)\in [0,1]$, and (ii) $\forall i \in [n], j \in [m]$, $\sum_{c_i \in supp(\Ccal_i)} \Pr[c_i] \cdot \pi_{i,j}(c_i) \leq 1$.

\subsection{Missing proofs}\label{sec: missing proofs from procurement}

\begin{proof}[Proof of~\Cref{prop: stochastic knapsack implementations}]
    The only step we cannot directly implement from the procedure outlined in the proof of~\Cref{theorem:stochastic knapsack} is the toss of the $\frac{\gamma}{\Pr[C_{i}(k_i)]}$ coin. The use of a Bernoulli factory would exponentially blow up the complexity of the procedure. Instead, we approximate these probabilities, sequentially, using multiple experiments and bounding the error using Chernoff bounds. 

    In order to decide whether to select some element $e_{i} \in N$, we repeatedly simulate our algorithm until element $e_{i}$, for $T = \frac{1}{2\epsilon^2}\log \frac{2|N|}{\delta}$ repetitions. In this simulation, the coins needed to make decisions until element $e_{i}$ are replaced with estimated coins (described shortly). Let $X_t$ be the indicator random variable for the event that $C_{i}(k_i)$ occurred at simulation $t \in [T]$. Instead of selecting element $e_{i}$ (when it is active) with probability $\frac{\gamma}{\Pr[C_{i}(k_i)]}$, we select it with probability $\frac{\gamma}{\left(\frac{1}{T}\sum_{t \in [T]} X_t+ \epsilon\right)}$.
    Standard Chernoff–Hoeffding bounds~\cite{hoeffding1994probability} imply that
    \begin{align*}
        \Pr\left[ \left|\frac{1}{T}\sum_{t \in [T]} X_t - \Pr[C_{i}(k_i)] \right|>\epsilon \right]& \le 2\exp\left(-2\epsilon^2 T \right) =2\exp\left(-2\epsilon^2 \frac{1}{2\epsilon^2}\log \frac{2|N|}{\delta} \right) \le \frac{\delta}{|N|}.
    \end{align*}
    Assuming that $\left|\frac{1}{T}\sum_{t \in [T]} X_t - \Pr[C_{i}(k_i)] \right| \le \epsilon$, $\left(\frac{1}{T}\sum_{t \in [T]} X_t+ \epsilon \right) \in [\Pr[C_{i}(k_i)], \Pr[C_{i}(k_i)]+2\epsilon]$. 
    
    When bounding $\Ex{W_i}$ (in the proof of~\Cref{prop: stochastic knapsack implementations}), the term $\frac{\gamma}{\Pr[C_{i}(k_i)]} \cdot \Pr[C_{i}(k_i)]$ is replaced with $\frac{\gamma \, \Pr[C_{i}(k_i)]}{\left(\frac{1}{T}\sum_{t \in [T]} X_t+ \epsilon\right)}$, which is at most $\gamma$. Thus $\Ex{W_i} \le \gamma \, K$, which gives us $\Pr[C_{i}(k_i)]\ge \gamma$ using the same arguments presented in the proof of \Cref{prop: stochastic knapsack implementations}. Thus, the probability that an active, element $e_{i}$ is selected is
    \[\frac{\Pr[\gamma C_{i}(k_i)]}{\left(\frac{1}{T}\sum_{t \in [T]} X_t+ \epsilon\right)} \ge \gamma \, \left(\frac{\Pr[C_{i}(k_i)]}{\Pr[C_{i}(k_i)]+2\epsilon} \right) \ge \gamma \, \left(\frac{1}{1+ 2 \epsilon/ \gamma} \right).\]
    Using a union bound we have that
    \[\Pr\left[ \exists i \in N: \left|\frac{1}{T}\sum_{t \in [T]} X_t - \Pr[C_{i}(k_i)] \right|>\epsilon \right] \le \delta.\]
    Thus, we overall have that with probability at least $1-\delta$, when we run the light scheme, each active element will be selected with probability at least $\gamma \left(\frac{1}{1+ 2 \epsilon/\gamma} \right)$.   
\end{proof}

\end{document}